\newcommand{\R}{\mathbb{R}\xspace}
\newcommand{\NP}{\ensuremath{\mathsf{NP}}\xspace}
\renewcommand{\P}{\ensuremath{\mathsf{P}}\xspace}
\newcommand{\mnev}{Mn{\"e}v\xspace}
\newtheorem{theorem}{Theorem}[section]
\newtheorem{corollary}[theorem]{Corollary}
\newtheorem{definition}[theorem]{Definition}
\newtheorem{lemma}[theorem]{Lemma}
\newtheorem{claim}[theorem]{Claim}
\newtheorem{fact}[theorem]{Fact}
\let\int\undefined
\DeclareMathOperator{\int}{int}
\DeclareMathOperator{\im}{im}
\newcommand{\isomorph}{\cong}
\newcommand{\ER}{\ensuremath{\exists\R}}
\newcommand{\oneinthree}{\textsc{1-in-3-SAT}\xspace}
\newcommand{\kpnsat}[3]{\textsc{$#2$-pos-$#3$-neg-$#1$-SAT}\xspace}
\DeclareMathOperator{\Sat}{Sat}
\newcommand{\disc}[1]{\ensuremath{\Sat(#1)}}
\newcommand{\induc}[1]{\ensuremath{I(#1)}}
\renewcommand{\phi}{\varphi}
\title{A Topological Version of Schaefer's Dichotomy Theorem}
\let\anonymous\undefined %
    \author{Redacted}
    \affil{Affiliation}
\author[1]{Patrick Schnider}
\author[1]{Simon Weber}
\affil[1]{Department of Computer Science, ETH Z\"{u}rich, Switzerland}
\date{}
\begin{document}

\maketitle

\begin{abstract}
Schaefer's dichotomy theorem [Schaefer, STOC'78] states that a boolean constraint satisfaction problem (CSP) is polynomial-time solvable if one of six given conditions holds for every type of constraint allowed in its instances. Otherwise, it is \NP-complete. In this paper, we analyze boolean CSPs in terms of their \emph{topological complexity}, instead of their \emph{computational complexity}. We attach a natural topological space to the set of solutions of a boolean CSP and introduce the notion of \emph{projection-universality}.
We prove that a boolean CSP is projection-universal if and only if it is categorized as $\mathsf{NP}$-complete by Schaefer's dichotomy theorem, showing that the dichotomy translates exactly from computational to topological complexity. We show a similar dichotomy for SAT variants and homotopy-universality.
\end{abstract}

{\small
\ifdefined\anonymous
\vfill
\else
\paragraph{Acknowledgments.} We thank Tillmann Miltzow for the helpful discussion. Simon Weber is supported by the Swiss National Science Foundation under project no. 204320.
\fi
}

\ifdefined\anonymous
\clearpage
\fi

\section{Introduction}

In this paper we study properties of solution spaces and their relation to complexity theory. Such questions originated in the following geometric problem: Given a point set $P$, its \emph{order type} is a combinatorial description of $P$, assigning each triple of points $(p,q,r)$ its orientation. The problem of \emph{order type realizability} asks whether for a given abstract order type there exists a point set with this order type. In 1956, Ringel~\cite{Ringel1956} asked whether for every realizable order type, every point set~$P$ with this order type can be continuously transformed into any other point set $P'$ with the same order type. 
Considering the uncountably infinite family of such point sets as a \emph{topological space}, Ringel's question is equivalent to asking whether this space is connected. We also call this space the \emph{realization space} of the given abstract order type, or alternatively, the \emph{solution space} of the given order type realizability instance.
With this topological view, \mnev{}~\cite{mnev1988universality} famously answered the connectivity question to the negative. In fact, \mnev{} proved a much stronger statement, nowadays known as \emph{\mnev{}'s universality theorem}: every semi-algebraic set\footnote{The semi-algebraic sets encompass a large variety of topological spaces and will be formally introduced later.} is stably equivalent\footnote{Stable equivalence is a very strong notion of equivalence of topological spaces.} to the realization space of some abstract order type. In particular, this implies that realization spaces can not only be disconnected, but can have arbitrarily complex topological features. Informally, we say that the order type realizability problem exhibits \emph{topological universality}.

Results like \mnev{}'s universality theorem have since been obtained for many types of problems. Richter-Gebert and Ziegler~\cite{richter1995realization} showed a similar universality theorem for the realization spaces of $4$\nobreakdash-dimensional polytopes. Datta~\cite{Datta2003nash} showed a universality theorem for the set of totally mixed Nash equilibria of three-player games. Shitov~\cite{shitov2016universality} proved universality for the spaces of factorizations of non-negative matrices into sums of non-negative rank-one matrices. Dobbins et al.\ proved universality for the spaces of nested polytopes~\cite{NestedPolytopesER}. 

The result of \mnev{} also implies that the problem of order type realizability is complete for the complexity class \ER, which recently is seeing much attention, e.g.\ \cite{artGalleryER,bertschinger2022training,miltzowCCSP,Schaefer2010_GeometryTopology}. It is known that $\NP\subseteq \ER\subseteq \mathsf{PSPACE}$, with both inclusions conjectured to be strict~\cite{Canny1988_PSPACE}. \ER{} is defined as the class of decision problems polynomial-time reducible to its canonical problem, the existential theory of the reals ($\mathsf{ETR}$). In the $\mathsf{ETR}$ problem, one is tasked to decide whether there exists a vector of real numbers that satisfies a given quantifier-free formula containing polynomial inequalities and equalities. By definition, $\mathsf{ETR}$ is both \ER-complete, and also exhibits topological universality. Reductions used to prove problems \ER-hard can often also be adapted to yield topological (or algebraic) universality results about the solution spaces of these problems, however no meta-theorem about the relationship between \ER-hardness or \ER-completeness and topological universality is known.

So far, the topological lens has only been used when considering the solution spaces of problems for which the solution space is naturally continuous, since solutions are described by real numbers. While this is the case in algebraic or geometric problems, combinatorial problems often have discrete solution spaces. However, topology can be useful to describe properties of discrete sets too, one just needs to find a natural definition of a continuous topological space. For this, the viewpoint of \emph{reconfiguration} is a natural approach.

Reconfiguration problems are widely studied in computer science and discrete mathematics~\cite{reconfigurationSurvey}. In reconfiguration problems, we are concerned with a set of configurations and the ways we can obtain one from the other, defining a \emph{flip graph}. 
Many questions can be asked about this graph, for example, whether it is connected, what its diameter is, or whether it is Hamiltonian.
We are particularly interested in reconfiguration on the set of solutions for an instance of some problem. For example, consider the set of satisfying variable assignments for a CNF formula~$\Phi$, and consider flipping of a single variable between $0$ and $1$ as the set of allowed moves. Instead of only looking at the flip graph (which topologically speaking is a $1$-dimensional topological space), we can consider interactions between multiple variables too; if all the variable assignments $x_{00}:=(x_{1,\ldots,d-2},0,0),x_{01}:=(x_{1,\ldots,d-2},0,1),x_{10}:=(x_{1,\ldots,d-2},1,0),x_{11}:=(x_{1,\ldots,d-2},1,1)$ are valid solutions, we not only consider the edges $\{x_{00},x_{01}\},\{x_{01},x_{11}\},\{x_{11},x_{10}\}$ and $\{x_{10},x_{00}\}$, but we can also ``fill in'' the \emph{2-dimensional face} between all of them. We can also do this with higher-dimensional faces, yielding a \emph{cubical complex}. In this very natural way, we get a topological space from a discrete set of solutions.
This very simple to define space can have surprisingly intricate structure, see \Cref{fig:S1} for an example where it is homeomorphic to a circle.

\begin{figure}
    \centering
    \noindent
    \begin{minipage}{0.5\textwidth}
        \centering
        \includegraphics{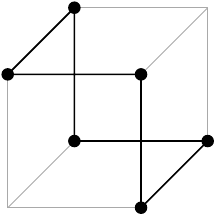}
    \end{minipage}%
    \begin{minipage}{0.5\textwidth}
    \centering
    \begin{align*}
        \phi&=(x_1\vee x_2\vee x_3)\\
        &\wedge (\neg x_1\vee \neg x_2\vee \neg x_3)\\
    \end{align*}
    \end{minipage}
    \caption{A CNF formula with a solution space homeomorphic to the circle $S^1$.}
    \label{fig:S1}
\end{figure}

In this paper, we study topological properties of such solution spaces of
\emph{boolean constraint satisfaction problems (CSPs)}.
A boolean CSP is specified by its possible constraint types, a set $S$ of logical relations. The problem instances of a boolean CSP are the formulae obtained by taking the conjunction (``and'') of finitely many constraints. The decision problem $SAT(S)$ is to decide satisfiability of a given such formula.  

Boolean CSPs are a very versatile and interesting class of problems. First off, they encompass a wide range of problems, in particular all common SAT variants. Furthermore, their simple and uniform definition makes it possible to prove very general statements. Thanks to this, their computational complexity is very well-understood nowadays. The seminal result on the computational complexity of boolean CSPs is the following \emph{dichotomy theorem} due to Schaefer:
\begin{restatable}{theorem}{schaefer}{(Schaefer's dichotomy theorem~\cite{schaefer})}\label{thm:schaefer}
    For every finite set $S$ of logical relations, the boolean constraint satisfaction problem $SAT(S)$ is polynomial-time solvable if there exists one condition in the following list that holds for all logical relations $R\in S$:
    \begin{enumerate}
        \item $R$ is $0$-valid: $(0,\ldots,0)\in R$.
        \item $R$ is $1$-valid: $(1,\ldots,1)\in R$.
        \item Horn SAT: $R$ is equivalent to a CNF formula in which every clause has at most one non-negated variable.
        \item Dual-Horn SAT: $R$ is equivalent to a CNF formula in which every clause has at most one negated variable.
        \item 2-SAT: $R$ is equivalent to a CNF formula in which every clause has at most two literals.
        \item $R$ is affine: $R$ can be expressed as the set of solutions to a system of affine equations over $GF(2)$.
    \end{enumerate}
    If there is no such condition, $SAT(S)$ is $\mathsf{NP}$-complete.
    For $SAT_C(S)$, the boolean CSP \emph{with constants}, the same classification holds, except that only the conditions 3.--6. are used.
\end{restatable}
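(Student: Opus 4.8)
The statement has two halves --- membership in \P whenever one of the six conditions holds, and \NP-completeness otherwise --- and I would prove them separately. Membership in \NP is immediate: since $S$ is fixed, each relation has bounded arity, so a candidate assignment is verified in polynomial time. For the ``if'' direction I would give one algorithm per condition. If every $R\in S$ is $0$-valid (resp.\ $1$-valid), the all-zeroes (resp.\ all-ones) assignment satisfies every instance. If every $R\in S$ is Horn, expand each constraint into its equivalent conjunction of Horn clauses --- only a constant blow-up, as $S$ is finite --- and decide by unit propagation; dual-Horn is symmetric. If every $R\in S$ is bijunctive, rewrite the instance as a $2$-CNF and run the classical strongly-connected-component algorithm on its implication graph. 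If every $R\in S$ is affine, rewrite each constraint as a system of $GF(2)$-equations and solve the combined linear system by Gaussian elimination. In each case the stated condition is precisely what licenses the rewriting; the same algorithms handle $SAT_C(S)$ under conditions 3--6, so this direction is routine.

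For the ``only if'' direction the organizing notion is \emph{primitive-positive (pp-) definability}: $R$ is pp-definable from $S$ if $R(x_1,\dots,x_n)\equiv\exists y_1\cdots\exists y_m\,\bigwedge_j C_j$, where each $C_j$ applies a relation of $S$ (or equality) to variables among the $x_i,y_\ell$. A single fixed pp-definition gives a polynomial-time many-one reduction from $SAT(S\cup\{R\})$ to $SAT(S)$: replace every $R$-constraint by a private copy of the defining formula, discharge equality constraints by identifying variables, and leave the existential variables free. For $SAT_C(S)$ one may additionally use the relations $\{0\}$ and $\{1\}$ in pp-definitions. Hence it suffices to show that a language violating all the relevant conditions pp-defines a relation $R^\ast$ for which $SAT(\{R^\ast\})$ is already known \NP-complete --- e.g.\ the ternary relation underlying \oneinthree, or that underlying \textsc{NAE-3-SAT} --- after which \NP-hardness follows via Cook--Levin.

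The crux is establishing that expressive power, and I would route it through the Galois connection between relations and polymorphisms: on the two-element domain, the relations pp-definable from $S$ (with equality allowed) are exactly those invariant under every polymorphism of $S$, i.e.\ the members of $\mathrm{Inv}(\mathrm{Pol}(S))$. The six conditions say precisely that $\mathrm{Pol}(S)$ contains, respectively, the constant-$0$ operation, the constant-$1$ operation, binary $\wedge$, binary $\vee$, the ternary majority, or the ternary minority $x\oplus y\oplus z$; together with negation $x\mapsto\bar x$ these generate the seven minimal boolean clones, and by Post's classification every clone other than the clone of projections lies above one of them. So, if $S$ violates conditions 3--6, the idempotent clone $\mathrm{Pol}(S\cup\{\{0\},\{1\}\})$ avoids all four \emph{idempotent} minimal clones (since an idempotent clone properly above the projections must contain one of them), hence equals the projection clone; as $\mathrm{Inv}$ of the projection clone consists of all boolean relations, $R^\ast$ is pp-definable and $SAT_C(S)$ is settled. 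If instead $S$ violates conditions 1--6, then $\mathrm{Pol}(S)$ avoids six of the seven minimal clones, so it is either the projection clone (again every relation is pp-definable) or the clone generated by negation; in this latter ``complementive'' case $\mathrm{Inv}(\mathrm{Pol}(S))$ still contains the \textsc{NAE-3-SAT} relation, which supplies the reduction.

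The main obstacle is exactly this last step: verifying that on $\{0,1\}$ the \emph{only} obstructions to pp-expressiveness are the six (resp.\ seven) listed ones --- i.e.\ reading off the bottom of Post's lattice correctly, including the subtlety that an essentially-unary polymorphism clone still leaves room for \textsc{NAE-3-SAT}-hardness. Schaefer's original proof avoids quoting Post's lattice and instead runs an explicit many-case induction, successively constructing (copies of) the constant relations, then a relation forcing two variables to be complementary, then a hard gadget; the delicate points there are isolating the minimal amount of ``non-structure'' in $S$ that already suffices and correctly dispatching the complementive exception in the constant-free setting.
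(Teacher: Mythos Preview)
The paper does not prove this theorem at all: it is quoted as an external result, attributed to Schaefer via the citation in the theorem header, and used as a black box throughout. There is therefore no ``paper's own proof'' to compare against. The only place the paper touches the internals of Schaefer's argument is in the proof of \Cref{lem:CSPprojectionUniversal}, where it invokes one specific feature of Schaefer's reduction --- namely that for \NP-complete $SAT(S)$, every 3-SAT formula $\Phi(x_1,\dots,x_d)$ is equivalent to $\exists y_1\cdots\exists y_k\,\Psi(x,y)$ for some $\Psi\in SAT(S)$. This is precisely the pp-definability property you build your hardness argument on, so your framework is consistent with how the paper consumes the theorem.

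As for your sketch itself: it is a correct high-level outline of the modern clone-theoretic proof. The tractability direction is routine as you say. For hardness, reducing via pp-definitions and then appealing to $\mathrm{Inv}(\mathrm{Pol}(S))$ together with Post's lattice is the standard contemporary route, and you correctly flag the one genuine subtlety --- the complementive case where $\mathrm{Pol}(S)$ is generated by negation, handled by observing that the \textsc{NAE} relation is still invariant. You also accurately note that Schaefer's original 1978 proof does not quote Post's lattice but instead runs an explicit case analysis building gadgets directly; either route works, and the paper (to the extent it cares) only needs the existential-variable shape of the resulting reduction, which both approaches deliver.
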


\subsection{Preliminaries and Results}

In this paper, we prove a topological version of Schaefer's dichotomoy theorem. Before stating our results, let us formally define the relevant notions, starting with CSPs both with and without constants.

A \emph{logical relation $R$} is a subset of $\{0,1\}^k$ for some finite $k\geq 1$, where $k$ is called its \emph{rank} $rk(R)$. For a logical relation $R$ of rank $k$, we write $R(x_1,\ldots,x_k)$ for its evaluation at the value of the given variables $x_1,\ldots,x_k$. Note that $x_1,\ldots,x_k$ do not necessarily have to be distinct.

\begin{definition}[Boolean CSP]
Given a set $S=\{R_1,\ldots\}$ of logical relations, an instance of the \emph{boolean constraint satisfaction problem} $SAT(S)$ is given by a sentence $\Phi$ that is a conjunction of relations in $S$, i.e.,
\[\Phi = \exists x_1,\ldots,x_d\in\{0,1\}^d: R_{j_1}(x_{1,1},\ldots,x_{1,rk(R_{j_1})})\wedge\ldots\wedge R_{j_n}(x_{n,1},\ldots,x_{n,rk(R_{j_n})}),\]
where for all $i,j$, $x_{i,j}\in \{x_1,\ldots,x_d\}$. We call $d\geq 1$ the \emph{dimension} of $\Phi$, and $n\geq 1$ its \emph{number of constraints}.
\end{definition}

\begin{definition}[Boolean CSP with constants]
In the above definition, if $x_{i,j}$ can additionally be the constants $0$ or $1$, we call the resulting problem the boolean CSP \emph{with constants} $SAT_C(S)$.
\end{definition}

To consider the solution space of discrete problems, we need two main ingredients: (i), a way to assign an instance of the problem its discrete set of solutions, and (ii), a way to turn this discrete set into a topological space. For boolean CSPs, we define these two ingredients as follows:

\begin{definition}[Discrete solution space]
    Given a sentence $\Phi$ of dimension $d$, its \emph{discrete solution space} $\disc{\Phi}$ is the set of vertices of the hypercube $[0,1]^d$ which correspond to a solution of $\Phi$.
\end{definition}

\begin{definition}[Induced cubical complex]
    Given a set $X$ of vertices of the hypercube $[0,1]^d$, its \emph{induced cubical complex} $\induc{X}$ is the collection of all faces $f$ of $[0,1]^d$ for which every vertex $v\in f$ is contained in $X$.
\end{definition}

For the notion of universality, we want to be able to build (up to some notions of equivalence formally defined later) any semi-algebraic set.

\begin{definition}[Semi-algebraic set~\cite{realAlgebraicGeom}]
    A semi-algebraic set is the union of the solution sets to finitely many finite systems of polynomial equalities and inequalities. More formally, a semi-algebraic set is a subset of $\R^d$ of the form
    \[\bigcup_{i=1}^s\bigcap_{j=1}^{r_i}\{x\in\R^d \;\vert\; f_{i,j}*_{i,j} 0\},\]
    where the functions $f_{i,j}$ are polynomials over $\R$ in the variables $x_1,\ldots,x_d$ and $*_{i,j}\in\{<,=\}$.
\end{definition}

Throughout the paper, we will use several standard concepts of algebraic topology, such as homeomorphism, homotopy equivalence, deformation retract and homology. We refer the reader to any textbook on algebraic topology for a thorough introduction, e.g. \cite{Bredon, Hatcher}.

\begin{definition}[Homotopy-universality]
    A problem $P$ with some function $V$ assigning each instance $\Phi\in P$ a topological space to the set of its solutions is called \emph{homotopy-universal}, if for every closed and bounded semi-algebraic set $S$, there exists an instance $\Phi\in P$, such that $V(\Phi)$ is homotopy equivalent to $S$.
\end{definition}

For boolean CSPs, the function $V$ is given by the induced cubical complex of the discrete solution space, that is, $\induc{\disc{\Phi}}$.

Homotopy-universality is a comparatively weak notion of universality. However, it is not a very robust definition. For example, consider a problem whose solution spaces consist of two disjoint identical copies of semi-algebraic sets. Such a problem should be considered topologically universal, however homotopy-universality fails to capture this. In spirit of the proof strategy in \cite{bertschinger2022topological} we suggest adding a notion of projection: When a problem has a natural notion for expressing the solution space as a sub-space of some product space, i.e., $V(P)\subseteq X\times Y$, we allow to project away one of the involved spaces $Y$. The projected solution space is then the set of points $x\in X$ for which there exists a point $y\in Y$ such that $(x,y)\in V(P)$. In the case of solutions being vertices of the hypercube, which is the product space of $d$ intervals, such a projection has a natural definition:
\begin{definition}[Projection]
    Given a subset $D\subseteq [d]$ of dimensions, the \emph{projection} $\pi_D$ maps each vertex $x\in\{0,1\}^d$ to the vertex $\hat{x}\in\{0,1\}^{d-|D|}$ by removing all entries corresponding to dimensions in~$D$. Similarly, for a subset $X\subseteq \{0,1\}^d$, we define $\pi_D(X):=\{\hat{x}\;\vert\;x\in X\}$.
\end{definition}

In the discrete setting, we first project the discrete solution space, and only turn it into a topological space afterwards.

\begin{definition}[Projection-universality]
    A problem $P$ with some function $V$ assigning each instance $\Phi\in P$ and choice of projection $\pi$ a topological space is called \emph{projection-universal}, if for every closed and bounded semi-algebraic set $S$, there exists an instance $\Phi\in P$ and a projection $\pi$, such that $V(\Phi,\pi)$ is homeomorphic to $S$.
\end{definition}
In our setting, this translates to the following definition:

\begin{definition}[Projection-universality for boolean CSPs]
    A boolean CSP $SAT(S)$ or $SAT_C(S)$ is called \emph{projection-universal}, if for every closed and bounded semi-algebraic set $S$, there exists an instance $\Phi$ of the CSP and a set of dimensions~$D$, such that $\induc{\pi_D(\disc{\Phi})}$ is homeomorphic to $S$.
\end{definition}

We note here that universality up to stable equivalence implies projection-universality. Furthermore, existing proofs of homotopy-universality such as the one in \cite{bertschinger2022topological} also imply projection-universality. In general, however, homotopy-universality and projection-universality are incomparable definitions, with neither implying the other.

We are now ready to state our main results.

\begin{theorem}\label{thm:main}
    For every finite set $S$ of logical relations, each of $SAT_C(S)$ and $SAT(S)$ is projection-universal if and only if it is categorized as $\mathsf{NP}$-complete by Schaefer's dichotomy theorem (\Cref{thm:schaefer}).
\end{theorem}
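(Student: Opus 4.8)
The plan is to prove the two directions separately, with the bulk of the work lying in the forward ("hard") direction: if $SAT(S)$ (resp.\ $SAT_C(S)$) is classified as $\mathsf{NP}$-complete by Schaefer's theorem, then it is projection-universal. For the easy direction, I would argue that if one of Schaefer's tractability conditions holds, then the discrete solution space $\disc{\Phi}$ — and, crucially, all of its projections $\pi_D(\disc{\Phi})$ — is itself highly structured, and the induced cubical complex $\induc{\pi_D(\disc{\Phi})}$ can only realize a restricted class of topological spaces. Concretely: $0$-valid or $1$-valid relations give solution spaces that always contain a fixed vertex and, after projection, still contain the all-zero (resp.\ all-one) vertex, forcing the complex to be a cone and hence contractible, so it cannot be homeomorphic to, say, a circle. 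For the Horn, dual-Horn, affine, and $2$-SAT cases I would use the classical closure characterizations (closure under $\wedge$, $\vee$, affine combinations over $GF(2)$, or the median operation) and observe that these closure properties are inherited by projections $\pi_D(\disc{\Phi})$; a set closed under a near-unanimity or a commutative/associative idempotent operation has an induced cubical complex that is contractible (or a disjoint union of a bounded number of contractible pieces for the affine case), which again rules out universality.

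For the hard direction, the strategy is to reduce to a single known universal problem and transport universality along CSP reductions that respect the induced-cubical-complex construction up to homeomorphism after projection. The natural anchor is the formula from \Cref{fig:S1}, or more generally the observation that \oneinthree-type or "not-all-equal" constraints already produce solution spaces with nontrivial topology; but to get \emph{all} closed bounded semi-algebraic sets, I would instead start from an established geometric/algebraic universality result (e.g.\ the one behind \ER-completeness of $\mathsf{ETR}$, or the discrete homotopy-universality construction of \cite{bertschinger2022topological}) and encode its arithmetic gadgets — addition and multiplication of coordinates presented in binary — as boolean constraints. The key structural lemma I would isolate is: the projection $\pi_D$ applied to $\disc{\Phi}$ lets us introduce "auxiliary" variables whose only role is to witness satisfaction of local gadgets, so that after projecting them away the remaining solution set is exactly the intersection of the global constraint with a prescribed subcube; taking cubical complexes then glues these local pieces into the target space. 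The case analysis from Schaefer's theorem enters here: one must show that \emph{any} $S$ not covered by conditions 1–6 (resp.\ 3–6 for $SAT_C$) can implement, via conjunctions and identification of variables (and constants, in the $SAT_C$ case), a fixed finite set of "universal" gadget relations — this is essentially the combinatorial heart of Schaefer's original hardness proof, which I would invoke in the form "non-Schaefer $S$ can pp-define some NAE-type or $1$-in-$3$-type relation," and then show that this small fixed CSP is projection-universal by an explicit construction.

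The main obstacle I anticipate is the step that upgrades "can implement a universal gadget" into "can realize an arbitrary closed bounded semi-algebraic set up to homeomorphism (not just homotopy, and not just stable equivalence)." Homeomorphism is a rigid requirement, and the cubical-complex construction only ever yields polytopal complexes, so the target $S$ must first be replaced by a homeomorphic (PL) cubical complex; I would need a clean lemma saying every closed bounded semi-algebraic set is homeomorphic to a finite cubical complex, and then a faithful encoding of an arbitrary finite cubical complex as $\induc{\pi_D(\disc{\Phi})}$ for a suitable $\Phi$ over the universal gadget CSP. Making the projection "clean" — ensuring no spurious faces appear in $\induc{\pi_D(\disc{\Phi})}$ because some unintended combination of auxiliary-variable values happens to satisfy all constraints — is the delicate bookkeeping that I expect to consume most of the proof; I would handle it by designing the auxiliary variables so that, for each target vertex, the fiber of the projection is either empty or a single point (or at least connected in a controlled way), which forces the induced complex on the projected set to match the target complex face-for-face. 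A secondary subtlety is keeping the two variants $SAT(S)$ and $SAT_C(S)$ in sync: without constants one loses conditions 1–2, so the universal-gadget implementation must avoid relying on pinning variables to $0$ or $1$, which I would address exactly as Schaefer does, by exploiting that a non-$0$-valid and non-$1$-valid $S$ can simulate constants up to a global symmetry that the topology does not see.
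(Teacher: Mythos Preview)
Your proposal has the right overall architecture (two directions, with the ``only if'' direction handled class by class and the ``if'' direction by reduction to a single universal CSP), but there is a genuine error in the easy direction and the hard direction is substantially overengineered compared to what is needed.

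\medskip

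\textbf{The $0$-valid / $1$-valid case is wrong.} Containing the all-zero vertex does \emph{not} make $\induc{\pi_D(\disc{\Phi})}$ a cone. Take the equality relation $R=\{(0,0),(1,1)\}$, which is $0$-valid; the formula $R(x_1,x_2)$ has $\disc{\Phi}=\{(0,0),(1,1)\}$ and its induced cubical complex is two isolated points, neither a cone nor contractible. The paper's obstruction is different and much simpler: the empty set is a closed bounded semi-algebraic set, but every $0$-valid (resp.\ $1$-valid) instance has a solution, and projection preserves non-emptiness, so $\induc{\pi_D(\disc{\Phi})}$ is never homeomorphic to~$\emptyset$.

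\medskip

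\textbf{The Horn / dual-Horn / $2$-SAT cases are not justified.} Closure under $\wedge$, $\vee$, or the median operation does \emph{not} obviously force contractibility of the induced cubical complex, and you give no argument for it. The paper does not prove contractibility either; instead it shows, via the Mayer--Vietoris sequence applied to the wedges $W_i=\induc{\disc{C_i}}$ of the individual clauses, that $H_p(\induc{\disc{\Phi}})\cong 0$ for all $p\geq 1$. The circle $S^1$ then serves as the obstruction. Your observation that the closure properties are inherited by projections is correct and is exactly what the paper proves separately (by rewriting the projected formula back into $2$-SAT / Horn / dual-Horn form), but this step only helps once you actually have a topological obstruction for the unprojected case.

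\medskip

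\textbf{The hard direction.} Encoding real arithmetic gadgets in binary is unnecessary. The paper's route is: every closed bounded semi-algebraic set is triangulable, every simplicial complex is homeomorphic to an induced cubical subcomplex of some $[0,1]^d$, any $V\subseteq\{0,1\}^d$ is $\disc{\Psi}$ for some CNF $\Psi$, the standard SAT-to-3-SAT reduction introduces auxiliary variables whose projection recovers $\disc{\Psi}$ exactly, and finally Schaefer's own reduction from $3$-SAT to any \NP-complete $SAT(S)$ again introduces auxiliary variables with the same projection property. Chaining projections gives projection-universality directly, with no fiber analysis and no ``spurious face'' bookkeeping. You do mention the cubical-complex lemma as a needed ingredient, which is correct; the point is that once you have it, no further geometry is required.
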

We thus get the following corollary:
\begin{corollary}
    Assuming $\P\neq\NP$, a boolean CSP (with or without constants) is projection-universal if and only if it is \NP-complete.
\end{corollary}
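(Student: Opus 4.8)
\medskip
The corollary follows immediately once \Cref{thm:main} is established. By \Cref{thm:schaefer}, the boolean CSPs that are \emph{not} categorized as \NP-complete by Schaefer are exactly those satisfying one of the six (respectively four) conditions, and these are polynomial-time solvable; since every boolean CSP lies in \NP, under $\P\neq\NP$ none of them is \NP-complete. Conversely, the boolean CSPs categorized as \NP-complete by Schaefer are \NP-complete unconditionally. Hence, assuming $\P\neq\NP$, ``categorized as \NP-complete by Schaefer's dichotomy theorem'' and ``\NP-complete'' describe the same class of boolean CSPs, so the corollary is just \Cref{thm:main} restated. It therefore remains to sketch a proof of \Cref{thm:main}, which has an ``only if'' and an ``if'' direction.

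For the ``only if'' direction I would argue the contrapositive: if $S$ satisfies one of Schaefer's tractability conditions, then the CSP is not projection-universal because it fails to realize some particular closed and bounded semi-algebraic set. If every $R\in S$ is $0$-valid, or every $R\in S$ is $1$-valid (conditions available only without constants), then every instance $\Phi$ is satisfiable, so $\disc{\Phi}\neq\emptyset$ and hence $\pi_D(\disc{\Phi})\neq\emptyset$ for every choice of $D$; consequently the empty set, which is a closed and bounded semi-algebraic set, is never realized. If $S$ is Horn, dual-Horn, $2$-SAT, or affine, then every solution set $\disc{\Phi}$ is closed under the associated idempotent polymorphism (coordinate-wise $\min$, coordinate-wise $\max$, coordinate-wise majority, or $x\oplus y\oplus z$), and since each of these operations commutes with the coordinate projection $\pi_D$, the set $\pi_D(\disc{\Phi})$ is closed under the same operation. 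I would then prove a topological lemma: the induced cubical complex of any subset of a hypercube closed under such an operation is a disjoint union of contractible spaces — in the affine case each component is a coordinate subcube, and in the remaining cases each component collapses/deformation-retracts onto a canonical extreme vertex. This rules out realizing a connected non-contractible space such as $S^1$, so the CSP is not projection-universal.

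For the ``if'' direction I would combine two ingredients. First, by semi-algebraic triangulation every closed and bounded semi-algebraic set is homeomorphic to the geometric realization of a finite simplicial complex; subdividing and cubulating, it is homeomorphic to an induced cubical subcomplex $\induc{X}$ of some hypercube $\{0,1\}^m$. So it suffices, for every such $X$, to produce an instance $\Phi$ and a set of dimensions $D$ with $\pi_D(\disc{\Phi})=X$. Second, I would invoke the structure theory behind Schaefer's theorem: via the Galois correspondence between constraint languages and polymorphism clones, together with Post's classification of the four minimal idempotent clones on $\{0,1\}$ (generated by $\wedge$, $\vee$, majority, $x\oplus y\oplus z$), a CSP with constants is categorized as \NP-complete exactly when $S$ together with the constants primitive-positively defines \emph{every} boolean relation. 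A primitive-positive definition is precisely a conjunction of $S$-constraints (using constants) with some variables existentially quantified, and existentially quantifying a variable is exactly projecting away its dimension. Hence for $SAT_C(S)$ one takes the pp-definition of the rank-$m$ relation $R_X:=X$, reads the auxiliary existential variables as extra dimensions collected into $D$, and obtains $\pi_D(\disc{\Phi})=X$, so $\induc{\pi_D(\disc{\Phi})}\isomorph\induc{X}$ as required. For $SAT(S)$ without constants, Schaefer's analysis shows that when $S$ is neither $0$-valid nor $1$-valid one can still, by identifying variables within a single constraint, pp-define a pattern that pins variables to fixed values, recovering the constants and reducing to the previous case; the remaining, more symmetric constraint languages are handled by a direct realization argument that exploits projections of their symmetric solution sets.

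I expect the main obstacle to be precisely this last point — the constant-free case for symmetric constraint languages — together with the cubulation step: while triangulating a semi-algebraic set is classical, embedding the resulting cubical complex into a hypercube as an \emph{induced} subcomplex (so that no unintended higher-dimensional face gets ``filled in'', and so that the required subdivisions do not destroy this property) demands a careful combinatorial construction.
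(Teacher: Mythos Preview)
Your first paragraph is exactly how the paper handles the corollary: it is stated immediately after \Cref{thm:main} with the remark ``we thus get the following corollary,'' using precisely the observation that under $\P\neq\NP$ Schaefer's dichotomy makes ``classified as \NP-complete'' coincide with ``\NP-complete.'' So at the level of the corollary itself your argument matches the paper.

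Where you diverge is in the sketch of \Cref{thm:main}. For the ``if'' direction your pp-definability argument is the paper's argument in modern dress: the paper goes SAT $\to$ 3-SAT $\to$ $SAT(S)$ by composing the classical clause-splitting reduction with Schaefer's original reduction, and both steps are, as you say, projections following pp-definitions. The cubulation worry you flag is handled by an explicit construction (map each simplex on vertex set $f\subseteq[d]$ to the induced cubical complex of $\{v\neq 0:v\le\sum_{i\in f}e_i\}$), and the constant-free case is not argued separately but inherited verbatim from Schaefer's own proof.

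For the ``only if'' direction your route is genuinely different. The paper does \emph{not} use polymorphisms; it fixes $\Phi=C_1\wedge\dots\wedge C_n$, lets $W_i$ be the solution complex of the single clause $C_i$, and runs an inductive Mayer--Vietoris argument to show $H_p(\bigcap_i W_i)\cong 0$ for all $p\ge 1$, the base case (unions of $n$ wedges have vanishing $H_p$ for $p\ge n$) being where the 2-SAT/Horn/Dual-Horn hypotheses enter via short counting arguments. Projection stability is then proved by explicit syntactic rewriting (resolution on the eliminated variable). Your polymorphism approach is more conceptual and makes projection-closure a one-liner, since coordinatewise operations commute with $\pi_D$; that is a real gain over the paper's three separate lemmas. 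However, your ``topological lemma'' that min/max/majority/affine-closed vertex sets yield disjoint unions of \emph{contractible} complexes is stronger than what the paper establishes and is not obvious in the Horn/dual-Horn case. For majority (2-SAT) it follows from the fact that median subsets of the hypercube yield CAT(0) cube complexes, and the affine case is easy, but for min-closed sets the deformation retract to a ``canonical extreme vertex'' you allude to does not exist in general (already $S=\{00,11\}$ shows the global minimum need not lie in every component), and a direct proof of contractibility per component is not standard. Acyclicity in degrees $\geq 1$ would already suffice to exclude $S^1$, and that is precisely what the paper proves --- by a rather different mechanism.
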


We furthermore show that for the boolean CSPs categorized as polynomial-time solvable by Schaefer, the problem is either trivially satisfiable (in the case of $0$-valid and $1$-valid constraints), or the solution space (and any possible projections of it) has trivial homology in all dimensions $p\geq 1$.

Finally, we show a similar dichotomy theorem for homotopy-universality in CSPs given by CNF formulae with restrictions on the numbers of positive, negative, and overall variables per clause.
\begin{restatable}{theorem}{CSPdichotomy}\label{thm:CNFdichotomy}
For $k,p,n$ with $p,n\leq k\leq n+p$, let \kpnsat{k}{p}{n} be the CSP defined by the constraints expressing all possible disjunctions of at most $k$ literals, which contain at most $p$ positive literals and at most $n$ negative literals. Then, \kpnsat{k}{p}{n} is homotopy-universal if and only if it is classified as \NP-complete by \Cref{thm:schaefer}.
\end{restatable}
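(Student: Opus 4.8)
The statement is an equivalence, so there are two directions, and by \Cref{thm:schaefer} both hinge on determining which of the six Schaefer conditions the constraint set of \kpnsat{k}{p}{n} satisfies. A disjunction of literals is $0$-valid iff it has a negative literal, $1$-valid iff it has a positive literal, (dual-)Horn iff it has at most one positive (negative) literal, bijunctive iff it has at most two literals, and affine iff it has at most one literal. Running the constraints of \kpnsat{k}{p}{n} through this list — using $p,n\le k\le n+p$, so that the clauses $x,\neg x$ are available once $p\ge 1$ resp.\ $n\ge 1$, the clause $x\vee y$ once $p\ge 2$, and all clauses with at most two positive, at most two negative, and at most three literals once $p,n\ge 2,k\ge 3$ — one finds that \kpnsat{k}{p}{n} is classified polynomial-time solvable exactly when $p\le 1$ (Horn), $n\le 1$ (dual-Horn), or $k\le 2$ ($2$-SAT), and \NP-complete exactly when $p\ge 2$, $n\ge 2$, and $k\ge 3$; the $0$- and $1$-valid cases ($p=0$, $n=0$) do not occur separately, being subsumed by Horn resp.\ dual-Horn.

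For the direction ``polynomial-time solvable $\Rightarrow$ not homotopy-universal'' we invoke the result proven here in connection with \Cref{thm:main} that the induced cubical complex of any Horn, dual-Horn, or $2$-SAT solution space has trivial homology in every dimension $\ge 1$. The circle $S^1=\{(x,y)\in\R^2:x^2+y^2=1\}$ is a closed bounded semi-algebraic set with $H_1(S^1)\ne 0$, and homotopy equivalence preserves homology, so no instance of a polynomial-time-solvable \kpnsat{k}{p}{n} can have solution space homotopy equivalent to $S^1$; hence it is not homotopy-universal.

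For the converse, assume $p\ge 2$, $n\ge 2$, $k\ge 3$. First, a monotonicity reduction: the constraint set of \kpnsat{k}{p}{n} then contains every disjunction of at most three literals with at most two positive and at most two negative literals, so \kpnsat{3}{2}{2} is a sub-CSP and every \kpnsat{3}{2}{2}-instance is verbatim a \kpnsat{k}{p}{n}-instance with the same discrete solution space and hence the same induced cubical complex; it thus suffices to prove \kpnsat{3}{2}{2} homotopy-universal. For this we run the universality construction behind \Cref{thm:main} — which, for an \NP-complete boolean CSP and a closed bounded semi-algebraic set $S$, yields an instance $\Phi$ and a set $D$ of auxiliary dimensions with $\induc{\pi_D(\disc{\Phi})}\cong S$ — but carry it out so that (i) every gadget uses only clauses with at most two positive, at most two negative, and at most three literals (this is where $p,n\ge 2,k\ge 3$ are used), and (ii) the auxiliary variables are \emph{topologically harmless}: each is, over every face of the resulting complex, either pinned to a single value or genuinely free. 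Under (ii) the projection $\pi_D$ restricts to a map $\induc{\disc{\Phi}}\to\induc{\pi_D(\disc{\Phi})}$ with contractible preimage over every face, so a standard fiberwise-contractibility argument (or an explicit deformation retraction onto a section) gives $\induc{\disc{\Phi}}\simeq\induc{\pi_D(\disc{\Phi})}\cong S$ — that is, homotopy-universality, not merely projection-universality.

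The crux, and the main obstacle, is (ii). Introducing an auxiliary variable to split a long clause into short mixed-sign clauses is trivial for satisfiability but can alter the homotopy type of the solution space — for instance, forcing a new variable to equal $x\vee y$ and substituting turns a filled square into a disconnected complex. The heart of the proof is to design the short-clause gadgets so this does not happen, i.e.\ so the fibers of the projection onto the original variables are contractible over \emph{every face}, not merely over every vertex, and to verify this gadget by gadget and face by face. The remaining ingredients — the case analysis, the reduction to \kpnsat{3}{2}{2}, and the homology obstruction in the easy direction — are routine given the machinery already developed for \Cref{thm:main}.
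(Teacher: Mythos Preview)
Your plan is correct and follows the same route as the paper: the identical case split (polynomial iff $p\le 1$, $n\le 1$, or $k\le 2$), the reduction of every \NP-complete case to \kpnsat{3}{2}{2}, the contractible-fiber/Smale argument for homotopy equivalence, and the homology obstruction via \Cref{cor:trivialHomology2Horn} for the easy direction. The gadget you flag as the crux turns out to be very simple in the paper --- $(x\vee y\vee z)$ is replaced by $(x\vee y\vee\neg\alpha)\wedge(\alpha\vee z)$ (and symmetrically for all-negative clauses), after which the fiber over any point is a single interval or point per auxiliary variable, so the verification is literally the one already carried out for \Cref{lem:3SAThomotopyuniversal} rather than a new face-by-face analysis.
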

However, we note that there are \NP-complete CSPs which are not homotopy-universal, such as \oneinthree.

\subsection{Discussion}

\paragraph{A hierarchy of universalities}
Various notions of topological universality have been studied in the literature. The strongest notion, homeomorphism-universality has been shown for the art gallery problem~\cite{stade2022topological} and implicitly for SAT~\cite{bertschinger2022topological}, see also \Cref{lem:SAThomeomorphism}. Universality up to stable equivalence is the oldest type of universality, since it is the one used in \mnev{}'s universality theorem~\cite{mnev1988universality}. Universality up to stable equivalence implies both homotopy-universality and projection-universality, but not vice-versa. 

While some link between topological universality and different levels of computational complexity has been suspected by several researchers, there is no consensus on the correct notion of universality. One suspected link was between \ER-hardness and homotopy-universality, however, recent results in \cite{bertschinger2022topological} as well as the present paper show that homotopy-universality can already be attained by problems in \NP, even CSPs with finitely many constraint types.
\ER-hardness might however still be related to universality up to stable equivalence, at least for CSPs with finitely many constraint types\footnote{Note that this excludes SAT, which is in \NP and homeomorphism-universal.}.

\paragraph{\P vs. \NP}
Our results show a structural difference between those boolean CSPs that lie in \P and those that are \NP-complete.
Such structural differences could be vital in proving $\P\neq\NP$, however, since we only consider boolean CSPs we are very far from making progress on this fundamental problem. We discuss the possibilities for extending our results to larger classes of problems further below. On the other hand, combined with the results in this paper, a result showing that no problem in $P$ can have homotopy-universality or projection-universality would imply $\P\neq\NP$.

While we consider the topology of the solution spaces of problems, previous work considering solution spaces in the eye of computational complexity has focused on their geometry. In the 1980s, there have been efforts to prove $\P=\NP$ by giving an LP formulation of the traveling salesman problem (TSP)~\cite{swart1986}. However, the proposed formulation was later refuted by Yannakakis~\cite{yannakakis1991expressing}, who showed that \emph{symmetric} LPs for the TSP must have exponential size. Recently, Fiorini et al.~\cite{fiorini2012linear} strengthened this result and proved that every LP that projects to the TSP, maximum cut, or maximum stable set polytopes must have a superpolynomial number of constraints. The crucial tool in this series of works is the notion of \emph{extension complexity} of a polytope $P$, a measure that formalizes the necessary size of every alternative polytope $Q$ that projects to $P$. The extension complexities of the polytopes associated with various problems have since been studied~\cite{aprile2022,avis2013,goos2018,kaibel2015,pokutta2013,shitov2016universality}.

\paragraph{Algebraic Universality}
\emph{Algebraic universality} is another notion of universality that often shows up in the context of algebraic and geometric \ER-hard problems. A problem is algebraically universal, if for every algebraic number $x$, there exists an instance of the problem that does not have a solution over fields not including $x$, but has a solution over $\mathbb{Q}$ extended by $x$. The ETR problem exhibits algebraic universality. Furthermore, algebraic universality is often preserved by reductions, even by reductions aimed at proving \ER-hardness. Thus, one can find many instances of algebraic universality being proven as a byproduct from \ER-hardness proofs~\cite{artGalleryER,bertschinger2022training,NestedPolytopesER,TensorRankUniversal,shitovFactorization}. Note that algebraic universality cannot occur in problems with \emph{open} solution spaces, such as the \ER-hard ETR variant used in \cite{openSolutionSpace}. We are not aware of any problems contained in \NP exhibiting algebraic universality.

\paragraph{Future Work}
In this work we only considered boolean CSPs. To obtain more convincing arguments for the suspected link between projection-universality and \NP-hardness, one would need to consider more general classes of problems. Interesting results in this line of research would be statements such as ``if a problem's solution space cannot be too complicated\footnote{For example, if the solution spaces are always contractible, always have trivial homology, etc.}, then it can be solved in polynomial time'' or ``a problem is \NP-hard if and only if it is topologically universal under some fixed equivalence relation''. Note that even both of these statements together would not resolve the $\P$ vs.\ $\NP$ problem.

To generalize our results, one needs to consider a larger class of problems with a common encoding and a common solution space definition. Natural candidates are more general classes of CSPs. However, their computational complexity needs to be well-understood before a link between topological and computational complexity can be investigated. Dichotomy theorems have been found for various such classes, for example for CSPs on finite domains by two independent and simultaneous papers by Zhuk~\cite{finiteDomainCSP} and Bulatov~\cite{bulatovGeneralCSP}. Both of these approaches also work for infinite domain CSPs~\cite{bulatovGeneralCSP,finiteDomainCSP,zhuk2018modification}. Miltzow and Schmiermann showed that \emph{continuous} constraint satisfaction problems are \ER-complete as soon as an addition constraint and any curved (i.e., not piecewise linear) constraint can be encoded~\cite{miltzowCCSP}. Bodirsky and Pinsker established a classification similar to Schaefer's for propositional logic of graphs~\cite{bodirskyGraphs}. Dichotomy theorems have also been obtained for optimization~\cite{kolmogorovGeneralValued,kozikValued} and counting~\cite{creignouCounting} versions of CSPs.

Another interesting approach is to study the topology of solution spaces of other CSPs in $\P$ or of problems which are conjectured to be \NP-intermediate, such as the graph isomorphism problem. In particular, a result showing that the solution spaces are not universal but can still have arbitrarily complicated topology in some sense would be further evidence for a relationship between complexity of solution spaces and computational complexity.

Finally, it is interesting to study universality through a more fine-grained lens.
Already Schaefer~\cite{schaefer} and later also Allender et al.~\cite{allenderRefining} refined the classification of the complexity of boolean CSPs into more fine-grained classes than just \P vs \NP-complete. This refined classification could also be interesting to investigate from a topological view. For example, is there a \emph{topological complexity measure} that is more fine-grained than topological universality vs.\ non-universality that agrees with this more fine-grained classification?

\subsection{Proof Techniques and Paper Overview}

To prove universality of the \NP-complete boolean CSPs, we show projection-universality and homotopy-universality of 3-SAT. For this, we extend the techniques of Bertschinger et al.~\cite{bertschinger2022topological} used to show homotopy-universality of the art gallery problem. Analogous methods work for the other \NP-complete SAT variants in \Cref{thm:CNFdichotomy}. We then show that by applying the reduction used by Schaefer~\cite{schaefer} we get projection-universality of all boolean CSPs classified as \NP-complete. This is done in \Cref{sec:universality}.

To prove non-universality of the six types of boolean CSPs characterized as polynomial-time solvable by Schaefer, we have to consider a few cases separately. In the case of trivially satisfiable problems, we can exclude the empty set from occurring as a solution space. In the case of affine constraints, the solution space can be observed to be a disjoint union of faces. In the cases of 2-SAT, Horn SAT, and Dual-Horn SAT, we show that the \emph{union} of the solution spaces of $k$ individual constraints cannot possess non-trivial $p$-homology for $p\geq k$. We then use the Mayer-Vietoris sequence to prove that the \emph{intersection} of these solution spaces (thus the solution space of the whole formula) has trivial $p$-homology for all $p\geq 1$. This already implies that we do not have homotopy-universality. To show the absence of projection-universality, we show that any projection of a 2-SAT, Horn SAT or Dual-Horn SAT instance is again a 2-SAT, Horn SAT or Dual-Horn SAT instance, respectively. This is done in \Cref{sec:nonuniversality}.

\section{Topological Universality}\label{sec:universality}
We first show that every closed and bounded semi-algebraic set $S$ can be homeomorphically represented as the induced cubical complex of some vertices in the hypercube. Using this, we show homeomorphism-universality of SAT. Applying the classic reduction of SAT to 3-SAT we show projection-universality and homotopy-universality of 3-SAT. With this, we then show projection-universality of all CSPs classified as polynomial-time by \Cref{thm:schaefer}. We also strengthen this to homotopy-universality for CNF variants.

\begin{lemma}\label{lem:semialgebraicsetasinducedcomplex}
    For every closed and bounded semi-algebraic set $S$, there exists a dimension $d$, such that $S$ is homeomorphic to the induced cubical complex \induc{V} of some set $V\subseteq \{0,1\}^d$.
\end{lemma}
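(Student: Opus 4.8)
The goal is to realize any closed, bounded semi-algebraic set $S \subseteq \R^m$ homeomorphically as $\induc{V}$ for a finite $V \subseteq \{0,1\}^d$. The natural strategy is to subdivide space into tiny cubes, mark the cubes that meet $S$, and take the induced complex on the marked vertices — but this requires care, because a naive $\varepsilon$-grid on $S$ need not be homeomorphic to $S$ (thin features can get "fattened" or pinched). So I would lean on the theory of semi-algebraic sets: $S$ admits a semi-algebraic triangulation (by the triangulation theorem for semi-algebraic sets, e.g.\ in \cite{realAlgebraicGeom}), hence $S$ is homeomorphic to (the geometric realization of) a finite simplicial complex $K$. It therefore suffices to prove the lemma for $S = |K|$, a finite simplicial (equivalently, finite polyhedral) complex.

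First I would reduce further from simplicial complexes to \emph{cubical} complexes: the barycentric-type subdivision of any simplicial complex, or more directly the standard fact that $|K|$ is homeomorphic to a finite cubical complex (each $p$-simplex is cut into $p{+}1$ cubes via barycentric coordinates), lets me assume $S = |Q|$ where $Q$ is an abstract finite cubical complex. Then the core step: embed $|Q|$ into some hypercube $[0,1]^d$ as an \emph{induced} subcomplex of the standard cubical structure. Here I would use that $Q$, being a finite cubical complex, embeds combinatorially into a large enough hypercube — map the vertex set of $Q$ injectively into $\{0,1\}^d$ so that the vertices of each cube of $Q$ go to the vertices of a genuine face of $[0,1]^d$, and so that \emph{no extra} faces are created (i.e.\ a set of vertices spans a face of $[0,1]^d$ and all are images of $Q$-vertices only if the corresponding $Q$-vertices already spanned a cube of $Q$). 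A clean way to guarantee the "induced" / no-spurious-faces property is a blow-up/padding trick: work in $\{0,1\}^{d}$ with $d$ large, assign each vertex of $Q$ a codeword with enough slack (e.g.\ using disjoint blocks of coordinates, or a coloring argument) so that distinct $Q$-vertices that are not joined by an edge of $Q$ differ in at least two coordinates, and more generally any candidate face of the hypercube whose vertex set lies in $V$ forces the preimages to be the vertex set of an actual cube in $Q$. Then $\induc{V}$ is exactly the image of $|Q| \cong S$.

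The main obstacle I anticipate is precisely this last point — controlling the \emph{induced} complex so that the ambient hypercube does not glue in unwanted faces that would change the homeomorphism type. It is easy to embed the $1$-skeleton; the subtlety is that $\induc{\cdot}$ fills in a face as soon as \emph{all} its vertices are present, so the embedding must be "flag-like" with respect to $Q$. I would handle this by choosing the coordinates in groups indexed by the cubes of $Q$ (a coordinate block per maximal cube, recording position within that cube), so that the set of hypercube-vertices lying in any single hypercube-face can be read off as lying in a common $Q$-cube; vertices from two different maximal cubes of $Q$ that are not in a common cube will disagree in the "which block is active" bookkeeping coordinates, preventing them from co-spanning a face. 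After verifying that $\induc{V}$ is combinatorially isomorphic to the subdivided $Q$ — hence homeomorphic to $|Q| \cong |K| \cong S$ — the lemma follows, with $d$ polynomial in the size of the triangulation of $S$. A remaining routine check is that passing from the abstract cubical complex to its realization inside $[0,1]^d$ is a homeomorphism, which is immediate since both are finite CW-complexes with matching face posets and the piecewise-linear identification on each face is affine.
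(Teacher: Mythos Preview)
Your outline follows the same three–step strategy as the paper (triangulate $S$, pass to a cubical complex, realize it as an \emph{induced} subcomplex of some hypercube), and the first two steps are fine. The gap is in the third step. Your assertion that ``$Q$, being a finite cubical complex, embeds combinatorially into a large enough hypercube'' is false in general: the $1$-skeleton of every hypercube is bipartite, so no cubical complex whose $1$-skeleton contains an odd cycle can sit inside any $[0,1]^d$ as a subcomplex, induced or not. Your proposed ``coordinate block per maximal cube'' patch is too vague to repair this; when two maximal cubes of $Q$ share a face, the vertices of that shared face must carry consistent encodings across both blocks, and you give no mechanism for this. So as written the plan does not establish that the specific $Q$ coming from the cubified simplicial complex admits the required induced embedding.

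The paper avoids this difficulty by never passing through an abstract cubical complex at all: it cubifies and embeds in a single explicit stroke. With $d$ equal to the number of vertices of the simplicial complex $K$, it sends vertex $i$ to the unit vector $e_i\in\{0,1\}^d$ and takes
\[
V \;=\; \bigcup_{f\in K}\Bigl\{v\in\{0,1\}^d \;\Big|\; v\neq 0 \text{ and } v\le \textstyle\sum_{i\in f} e_i\Bigr\},
\]
i.e.\ $V$ is the set of indicator vectors of nonempty faces of $K$. A face of the hypercube has all its vertices in $V$ precisely when its set of free-and-fixed-to-$1$ coordinates is a face of $K$ and at least one coordinate is fixed to~$1$; this simultaneously shows that $\induc{V}$ is the standard cubical subdivision of $|K|$ and that no spurious hypercube faces are filled in. In short, the right choice is one coordinate per \emph{vertex} of $K$, not per maximal cube, and with that choice the ``induced'' property you were worried about is automatic.
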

\begin{proof}
    Every semi-algebraic set $S$ can be triangulated~\cite{hironaka1975triangulationsOfAlgebraic,hofmann2009triangulationsOfSemialgebraic}. Simplicial complexes can be turned into homeomorphic cubical complexes~\cite[Lemma 1.2]{blass1972simplicialtocubical}. This can for example be achieved as follows: Given a simplicial complex $K$ on $d$ vertices, we map each vertex $i\in[d]$ to a unit vector $e_i$ in the $d$-dimensional hypercube $[0,1]^d$. For each face $f\subseteq[d]$, we add the induced cubical complex of the set
    \[V_f:=\big\{v\in\{0,1\}^d\;\vert\; v\neq 0\text{ and }v\leq \sum_{i\in f}e_i\big\}.\]
    One can easily verify that this yields a homeomorphic cubical complex. Furthermore, one can verify that the induced cubical complex of $V:=\bigcup_{f\in K}V_f$ is equal to the union of the induced cubical complexes of each $V_f$. Thus, the cubical complex induced by $V$ is indeed homeomorphic to $K$ and to the semi-algebraic set $S$.
\end{proof}

From this we already get the following universality statement.

\begin{lemma}\label{lem:SAThomeomorphism}
    SAT is homeomorphism-universal.
\end{lemma}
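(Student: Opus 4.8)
The plan is to combine Lemma~\ref{lem:semialgebraicsetasinducedcomplex} with the observation that any given subset $V\subseteq\{0,1\}^d$ can be forced to be exactly the discrete solution space $\disc{\Phi}$ of a single SAT instance $\Phi$ — since a general SAT instance is an arbitrary CNF formula with no restriction on clause width or on the number of clauses, and SAT (unlike the finite-constraint-type CSPs) carries no constants restriction issue because we may always pad with the constant-free encoding of constants. Concretely, given a closed and bounded semi-algebraic set $S$, first invoke Lemma~\ref{lem:semialgebraicsetasinducedcomplex} to obtain a dimension $d$ and a vertex set $V\subseteq\{0,1\}^d$ with $\induc{V}\homeo S$.

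Next I would exhibit a CNF formula $\Phi$ on the $d$ variables $x_1,\dots,x_d$ whose set of satisfying assignments is precisely $V$. This is the standard "encode the truth table as a CNF" construction: for each vertex $w\in\{0,1\}^d$ that is \emph{not} in $V$, add the clause $C_w := \bigvee_{i:\,w_i=0} x_i \;\vee\; \bigvee_{i:\,w_i=1}\neg x_i$, which is falsified exactly by the assignment $w$ and satisfied by every other assignment. Taking $\Phi := \bigwedge_{w\notin V} C_w$, an assignment $v$ satisfies $\Phi$ if and only if $v\notin\{0,1\}^d\setminus V$, i.e.\ iff $v\in V$. Hence $\disc{\Phi}=V$, and therefore $\induc{\disc{\Phi}}=\induc{V}\homeo S$. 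Since the function $V(\cdot)$ attached to SAT instances is exactly $\Phi\mapsto\induc{\disc{\Phi}}$, this shows $V(\Phi)$ is homeomorphic to $S$, establishing homeomorphism-universality.

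There is essentially no hard step here; the only thing to double-check is bookkeeping. First, the construction of $\Phi$ can produce up to $2^d$ clauses, but homeomorphism-universality imposes no bound on instance size, so this is fine. Second, one should note that each clause $C_w$ is a genuine disjunction of literals over distinct variables, so $\Phi$ is a legal CNF formula and thus a legal SAT instance. Third, if $V=\emptyset$ (which can happen only if $S=\emptyset$, excluded since semi-algebraic sets here are implicitly nonempty, or handled trivially), one would take $\Phi$ to be an unsatisfiable formula such as $x_1\wedge\neg x_1$; but in the generic case $V$ is nonempty. The main conceptual point — that SAT's unrestricted clause structure lets us realize an \emph{arbitrary} Boolean function and hence an arbitrary vertex set — is what makes the homeomorphism (rather than merely homotopy) statement immediate, in contrast to the restricted CSPs treated later.
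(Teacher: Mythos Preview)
Your proof is correct and follows essentially the same approach as the paper: invoke Lemma~\ref{lem:semialgebraicsetasinducedcomplex} to get $V$, then write down a CNF formula whose satisfying set is exactly $V$. The only cosmetic difference is that you give the explicit maxterm construction (one clause per excluded vertex), whereas the paper simply remarks that such a formula can be obtained by building a DNF and converting to CNF.
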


\begin{proof}
    Let $S$ be a closed and bounded semi-algebraic set. By \Cref{lem:semialgebraicsetasinducedcomplex} there exists an integer $d$ such that $S$ is homeomorphic to the induced  cubical complex $\induc{V}$ of some set $V\subseteq\{0,1\}^d$.
    Then there exists a CNF formula $\Psi$ with $\disc{\Psi}=V$. Such a formula can for example be obtained by building a DNF formula and converting it to CNF.
\end{proof}

Applying the classic reduction from SAT to 3-SAT, we get projection-universality of 3-SAT.
\begin{lemma}\label{lem:3SATprojectionuniversal}
For $S$ being the set of all logical relations of rank $k\leq 3$ that can be expressed as the disjunction of $k$ (possibly negated) variables, $SAT(S)$ (also called 3-SAT) is projection-universal.    
\end{lemma}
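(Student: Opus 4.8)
The plan is to reduce from SAT, which is homeomorphism-universal by \Cref{lem:SAThomeomorphism}, using the classic Tseitin-style reduction from SAT to 3-SAT, and to track what this reduction does to the (discrete) solution space. Given a CNF formula $\Psi$ on variables $x_1,\ldots,x_d$, the standard reduction replaces each long clause $(\ell_1\vee\cdots\vee\ell_m)$ by a chain of rank-$3$ clauses using fresh auxiliary variables $y_1,\ldots,y_{m-3}$: namely $(\ell_1\vee\ell_2\vee y_1)\wedge(\neg y_1\vee\ell_3\vee y_2)\wedge\cdots\wedge(\neg y_{m-3}\vee\ell_{m-1}\vee\ell_m)$. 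Clauses of rank at most $3$ are kept as-is, and clauses of rank $1$ or $2$ are padded (e.g.\ by repeating a literal) so that every constraint used is genuinely one of the allowed relations in $S$. Call the resulting 3-CNF formula $\Phi$, on variables $x_1,\ldots,x_d,y_1,\ldots,y_N$ where $y_1,\ldots,y_N$ are all the auxiliary variables over all clauses.

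The key step is to show that projecting away the auxiliary dimensions recovers exactly the solution space of $\Psi$: if $D\subseteq[d+N]$ is the set of coordinates corresponding to the $y$-variables, then $\pi_D(\disc{\Phi})=\disc{\Psi}$. The inclusion $\pi_D(\disc{\Phi})\subseteq\disc{\Psi}$ is the soundness of the reduction: any satisfying assignment of $\Phi$, restricted to the $x$-variables, satisfies $\Psi$. The reverse inclusion is the completeness direction: for any satisfying assignment of $\Psi$, and for each long clause, at least one literal $\ell_i$ is true, and one can set the auxiliary variables of that clause's chain to $1$ for indices below the ``cut'' and $0$ above it, satisfying every link of the chain; doing this clause-by-clause extends the assignment to a satisfying assignment of $\Phi$. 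Hence $\disc{\Psi}\subseteq\pi_D(\disc{\Phi})$. Combining, $\induc{\pi_D(\disc{\Phi})}=\induc{\disc{\Psi}}$, which by the proof of \Cref{lem:SAThomeomorphism} (choosing $\Psi$ so that $\disc{\Psi}=V$ realizes the desired set) is homeomorphic to the given closed and bounded semi-algebraic set $S$. Since this works for every such $S$, $SAT(S)$ with $S$ the rank-$\leq 3$ disjunction relations is projection-universal.

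I would organize the write-up as: (i) recall the reduction and fix notation for the auxiliary variables, being careful that every constraint is literally a relation in $S$ (the padding of short clauses is the fussy but routine part); (ii) prove the two set inclusions establishing $\pi_D(\disc{\Phi})=\disc{\Psi}$; (iii) invoke \Cref{lem:SAThomeomorphism}, or rather its proof, to pick $\Psi$ with $\disc{\Psi}=V$ for the $V$ coming from \Cref{lem:semialgebraicsetasinducedcomplex}, and conclude. The main obstacle is purely a bookkeeping one: making sure the chain gadget's auxiliary variables are genuinely free to take the ``staircase'' values $1\cdots 1\,0\cdots 0$ whenever the original clause is satisfied, so that the projection is onto and not merely a subset of $\disc{\Psi}$ — i.e.\ that the reduction does not secretly constrain the $x$-variables beyond $\Psi$ itself. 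This is standard for the SAT-to-3-SAT reduction, but since we care about the whole solution space and not just nonemptiness, it must be verified explicitly rather than cited as a black box.
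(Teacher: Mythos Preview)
Your proposal is correct and follows essentially the same approach as the paper: invoke \Cref{lem:SAThomeomorphism} to obtain a CNF formula $\Psi$ realizing the target space, apply the classical chain reduction to obtain a 3-SAT formula $\Phi$, and observe that projecting away the auxiliary variables recovers $\disc{\Psi}$ exactly. The paper simply cites the equivalence $x\in\disc{\Psi}\Leftrightarrow\exists y:(x,y)\in\disc{\Phi}$ as well-known, whereas you spell out the staircase argument; note also that your padding step for rank-$1$ and rank-$2$ clauses is unnecessary, since $S$ already contains the rank-$1$ and rank-$2$ disjunction relations by definition.
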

\begin{proof}
    Let $S$ be a closed and bounded semi-algebraic set. By \Cref{lem:SAThomeomorphism} there exists a CNF formula~$\Psi$ whose cubical complex induced by its solutions is homeomorphic to $S$.
    We now translate~$\Psi$ into a 3-SAT formula $\Phi$.
    For every clause $C$ of $\Psi$ with at least four literals, we apply the classical translation of CNF into 3-CNF. For simplicity, assume that there is only one such clause; the following arguments can be applied to each clause independently. Let $C=z_1\vee z_2\vee\ldots\vee z_k$ with $k\geq 4$ and $z_i\in\{x_1,\neg x_1,\ldots, x_d,\neg x_d\}$. $C$ is replaced by the formula
    \[F = (z_1\vee z_2 \vee y_{1}) \wedge (\neg y_{1}\vee z_3\vee y_{2})\wedge \ldots\wedge (\neg y_{k-4}\vee z_{k-2}\vee y_{k-3})\wedge (\neg y_{k-3}\vee z_{k-1}\vee z_{k}).\]
    It is well-known that for this construction, $x\in\{0,1\}^d$ is in $\disc{\Psi}$ if and only if there exists $y\in\{0,1\}^{k-3}$ such that $(x,y)\in\disc{\Phi}$.
    Thus, considering $D$ to be the set of dimensions corresponding to $y_1,\ldots,y_{k-3}$, we have $\induc{\pi_D(\disc{\Phi})}=\induc{\disc{\Psi}}$, which is homeomorphic to $S$, proving projection-universality.
\end{proof}

To extend this further to homotopy-universality of 3-SAT, we follow a similar strategy as used in \cite{bertschinger2022topological}  and use the following version of a mapping theorem due to Smale~\cite{smale1957vietoris}:
\begin{lemma}[{\cite[Lemma 3.6]{bertschinger2022topological}}]\label{lem:smale}
    Let $X$ and $Y$ be connected cubical complexes. 
    Assume that there exists a continuous proper surjective map $f:X\rightarrow Y$ such that for each point
    $y\in Y$, its pre-image $f^{-1}(y)$ is contractible. Then $X$ and $Y$ are homotopy equivalent.
\end{lemma}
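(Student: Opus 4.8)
The plan is to derive the statement as a combination of two classical results: Smale's Vietoris-type mapping theorem for homotopy~\cite{smale1957vietoris} and Whitehead's theorem. First I would record that every cubical complex arising in our setting is a finite subcomplex of some hypercube $[0,1]^d$, hence compact and metrizable, and carries a canonical finite CW structure; in particular it is an absolute neighborhood retract and is locally contractible, so the topological regularity hypotheses under which Smale's theorem is stated (local compactness, separability, being an \emph{LC}$^\infty$ ANR) are automatically satisfied. The properness assumption on $f$ is then automatic for finite complexes, but it is kept in the statement because it is the hypothesis actually checked in the applications and because Smale's theorem is usually phrased for proper maps of locally compact spaces.

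Next I would apply Smale's theorem to $f\colon X\to Y$: a continuous proper surjection whose fiber $f^{-1}(y)$ over every point $y\in Y$ is $r$-connected induces isomorphisms on $\pi_k$ for $k\le r$ and a surjection on $\pi_{r+1}$, for every choice of basepoint, where connectedness of $X$ and $Y$ takes care of $\pi_0$ and the basepoint bookkeeping for $\pi_1$. Since by hypothesis each fiber is contractible, it is $r$-connected for every $r$, and therefore $f_*\colon\pi_k(X)\to\pi_k(Y)$ is an isomorphism for all $k\ge 0$. Thus $f$ is a weak homotopy equivalence.

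Finally, since $X$ and $Y$ are CW complexes and $f$ is a weak homotopy equivalence, Whitehead's theorem upgrades this to an honest homotopy equivalence, giving $X\homeo Y$ as claimed. Alternatively, one can simply invoke this verbatim as \cite[Lemma~3.6]{bertschinger2022topological}, where it is proved in exactly this way.

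The only genuinely delicate point is the verification of the topological regularity hypotheses needed to invoke Smale's theorem — namely that finite cubical complexes are ANRs and that fiberwise contractibility really does force the fibers to be highly connected in every dimension; once these technicalities are dispatched, the argument is a two-line concatenation of classical machinery. A self-contained alternative would be to construct an explicit homotopy inverse to $f$ via a mapping-cylinder argument combined with the contractibility of the fibers and a partition of unity, but this is considerably more technical and unnecessary here given that the black boxes are available.
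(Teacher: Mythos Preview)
The paper does not prove this lemma at all: it is stated with a citation to \cite[Lemma~3.6]{bertschinger2022topological} and then used as a black box in the proof of \Cref{lem:3SAThomotopyuniversal}. Your proposal therefore goes strictly beyond what the paper does, supplying the standard derivation (Smale's Vietoris-type theorem to get a weak homotopy equivalence, then Whitehead's theorem to upgrade it on CW complexes) that is presumably what the cited reference contains; as you yourself note, one could equally well just invoke the citation verbatim, which is exactly the route the paper takes.
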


\begin{lemma}\label{lem:3SAThomotopyuniversal}
     3-SAT is homotopy-universal.
\end{lemma}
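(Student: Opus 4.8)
The plan is to upgrade the projection-universality of 3-SAT from \Cref{lem:3SATprojectionuniversal} to homotopy-universality by showing that the projection map used there is, up to homotopy, an identity. Concretely, given a closed and bounded semi-algebraic set $S$, \Cref{lem:3SATprojectionuniversal} produces a 3-SAT instance $\Phi$ of some dimension $d'$ and a set of dimensions $D$ (the auxiliary variables $y_1,\dots,y_{k-3}$ introduced per long clause) with $\induc{\pi_D(\disc{\Phi})}$ homeomorphic to $S$. I want to instead claim that $\induc{\disc{\Phi}}$ itself is homotopy equivalent to $S$. To get this, I would apply \Cref{lem:smale} to the simplicial/cubical projection map $f:\induc{\disc{\Phi}}\to\induc{\pi_D(\disc{\Phi})}$ induced by $\pi_D$ on the level of geometric realizations, after first checking its hypotheses: $f$ is continuous (it is the restriction of a linear projection of hypercubes), proper (both complexes are compact), and surjective by the definition of $\pi_D(\disc{\Phi})$. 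The source and target can be assumed connected after translating and, if necessary, adding a dummy isolated clause or working componentwise; alternatively one reduces to the connected case as in \cite{bertschinger2022topological}. The crux is therefore the fiber condition.

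\textbf{Key step: contractible fibers.} The main work is to show that for every point $\hat{y}\in\induc{\pi_D(\disc{\Phi})}$ the preimage $f^{-1}(\hat{y})$ is contractible. A point $\hat y$ lies in the relative interior of a unique face $\hat\sigma$ of the hypercube $[0,1]^{d'-|D|}$ on the non-$D$ coordinates; its fiber under the linear projection is a product of $\hat y$ with a subcomplex of the cube spanned by the $D$-coordinates, namely the induced cubical complex of the set of those $y\in\{0,1\}^{|D|}$ such that every vertex $v$ of $\hat\sigma$ satisfies $(v,y)\in\disc{\Phi}$. So it suffices to show: for each face $\hat\sigma$ of $\induc{\pi_D(\disc{\Phi})}$, the set $Y(\hat\sigma):=\{y : (v,y)\in\disc{\Phi}\text{ for all vertices }v\text{ of }\hat\sigma\}$ induces a contractible (indeed nonempty and ``directed'') cubical complex. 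Here I would exploit the very specific gadget structure of the CNF-to-3-CNF reduction: the auxiliary variables $y_1,\dots,y_{k-3}$ attached to a long clause $C=z_1\vee\dots\vee z_k$ form a chain, and for any fixed assignment to the original variables that satisfies $C$, the set of satisfying assignments to the chain $y_1,\dots,y_{k-3}$ is an ``interval'' in the boolean lattice: there is a threshold index $t$ (the position of the first satisfied literal) such that $(y_1,\dots,y_{k-3})$ satisfies the gadget iff $y_1=\dots=y_{t-1}=1$ (forced) and $y_t,\dots,y_{k-3}$ are unconstrained except $y_j=1$ is always allowed — more precisely the feasible set is $\{y: y_i=1 \text{ for } i<t\}$, a subcube, hence contractible. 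Taking the intersection over all vertices $v$ of a face $\hat\sigma$ only raises the thresholds to their coordinatewise maximum, so $Y(\hat\sigma)$ is still a (nonempty, since $\hat\sigma\subseteq\induc{\pi_D(\disc\Phi)}$ means each $v$ extends) subcube, and its induced cubical complex is a cube, which is contractible. When there are several long clauses the feasible set factors as a product of such subcubes, so the same conclusion holds. This verifies the fiber hypothesis of \Cref{lem:smale}.

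\textbf{Conclusion and the main obstacle.} With continuous, proper, surjective $f$ and contractible fibers in hand, \Cref{lem:smale} gives $\induc{\disc{\Phi}}\simeq\induc{\pi_D(\disc{\Phi})}\cong S$, so $S$ is realized up to homotopy equivalence by the genuine solution-space complex of the 3-SAT instance $\Phi$, which is exactly homotopy-universality. The part I expect to require the most care is the contractible-fibers claim: one must argue precisely that intersecting the chain-gadget feasibility conditions over all vertices of a face of the projected complex yields a (nonempty) subcube rather than some more complicated monotone set, and handle the bookkeeping when a vertex $v$ of $\hat\sigma$ makes several literals of $C$ true simultaneously or makes $C$ true via a variable that also appears in other clauses. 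A secondary nuisance is the connectedness hypothesis of \Cref{lem:smale}: strictly one should either note that homotopy equivalence can be checked componentwise and that $f$ restricts to a bijection on components (since the gadget feasibility set is always nonempty), or add a harmless global connecting gadget; I would include a sentence reducing to the connected case exactly as in \cite{bertschinger2022topological}.
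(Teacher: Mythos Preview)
Your overall strategy is exactly the paper's: apply \Cref{lem:smale} to the projection coming from \Cref{lem:3SATprojectionuniversal}, and reduce everything to contractibility of the fibers. The setup (continuity, properness, surjectivity, handling connected components separately) is fine and matches the paper.

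The gap is in your description of the fibers. For a fixed assignment to the original variables that satisfies the long clause $C=z_1\vee\cdots\vee z_k$, the set of feasible $(y_1,\ldots,y_{k-3})$ is \emph{not} the subcube $\{y:y_i=1\text{ for }i<t\}$ determined by the first satisfied literal. The later gadget clauses still impose constraints: each unsatisfied middle clause $(\neg y_{i-1}\vee z_{i+1}\vee y_i)$ forces $y_{i-1}\Rightarrow y_i$, and if $z_{k-1},z_k$ are both false the last clause forces $y_{k-3}=0$. Concretely, for $k=5$ with only $z_3$ true the feasible set is the single point $\{(1,0)\}$, not the face $\{(1,0),(1,1)\}$; with $z_2$ and $z_4$ true it is the path $\{(0,0),(0,1),(1,1)\}$, which is not a face of the square at all. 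So your ``threshold'' picture and the ensuing ``intersecting only raises the thresholds'' argument both fail.

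What actually happens (and what the paper does) is this: look at which gadget clauses $F_1,\ldots,F_{k-2}$ are already satisfied by \emph{every} vertex of the face $\hat\sigma$; call that set $F^*$ and the rest $F^-$. Between two consecutive clauses of $F^*$ the $y$-variables are only constrained by a chain of implications, so the feasible patterns there are the monotone $0/1$ strings, whose induced cubical complex is a \emph{path}. Different gaps are independent, so the fiber is a product of paths, i.e.\ homeomorphic to a cube and hence contractible. This is the ``more complicated monotone set'' you were worried about; it is not a face of the hypercube, but it is still contractible, and that is all \Cref{lem:smale} needs. Once you replace your subcube claim by this product-of-paths analysis, the rest of your outline goes through exactly as in the paper.
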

\begin{proof}
    We prove that the projection $\pi_D$ in the proof of \Cref{lem:3SATprojectionuniversal} in fact induces a homotopy equivalence between $\induc{\pi_D(\disc{\Phi})}$ and $\induc{\disc{\Phi}}$. Consider the map $f:[0,1]^{d}\times [0,1]^{k-3}\rightarrow [0,1]^d$ that maps each pair $(x,y)$ to $x$. This map is simply the extension of $\pi_D$ to the whole cube $[0,1]^{d+k-3}$.
    
    We restrict the domain of $f$ to $\induc{\disc{\Phi}}$, and claim that it then fulfills the conditions of \Cref{lem:smale}. Firstly, it is clear that $f$ is continuous and that it is surjective onto~$\induc{\pi_D(\disc{\Phi})}$. Secondly, for every compact set $X\subseteq \induc{\pi_D(\disc{\Phi})}$, $f^{-1}(X)$ is also compact, thus $f$ is proper.
    
    Finally, let us consider the pre-image $f^{-1}(x)$ for every $x\in \induc{\pi_D(\disc{\Phi})}$. We have that $f^{-1}(x)=\{x\}\times Y$, where $Y$ is some cubical complex. Let $V_x$ be the set of vertices of the minimal face of $\{0,1\}^d$ that contains $x$ in its interior. Then, we can see that $Y$ is induced by the vertices $y'\in\{0,1\}^{k-3}$ such that for all $x'\in V_x$, $(x',y')\in\disc{\Phi}$. We aim to prove that $Y$ is contractible by showing that it is homeomorphic to a cube.
    
    To achieve this, we partition the set of clauses of the formula $F=F_1\wedge\ldots\wedge F_{k-2}$ into $F^*$ and $F^-$, where a clause of $F_i$ is put into $F^*$ if it is already fulfilled by all $x'\in V_x$, and into $F^-$, otherwise. 
    Consider all sequences $F_i,\ldots,F_{i+\ell}$ such that $F_i,F_{i+\ell}\in F^*$ and $F_j\in F^-$ for all $j\in \{i+1,\ldots,i+\ell-1\}$. For each such sequence, $y_i,\ldots,y_{i+\ell-1}$ can be assigned values independently from all other $y_j$. Furthermore, for $y_i,\ldots,y_{i+\ell-1}$, the fulfilling assignments are the assignments $y_i=\ldots =y_{i+k}=1$ and $y_{i+k+1}=\ldots=y_{i+\ell-1}=0$. In the induced cubical complex $Y$, this is homeomorphic to a path. Since the assignment for each sequence is independent from the others, we thus have that $Y$ is homeomorphic to the Cartesian product of $|F^*|-1$ paths, i.e., a $(|F^*|-1)$-dimensional cube.
    Thus, for every $x\in\induc{\pi_D(\disc{\Phi})}$, $f^{-1}(x)$ is contractible.

    We can therefore apply \Cref{lem:smale} individually to each connected component of $\induc{\disc{\Phi}}$, and get homotopy equivalence between $\induc{\disc{\Phi}}$ and $\induc{\pi_D(\disc{\Phi})}$ as desired. Thus, the lemma follows.
\end{proof}

Following the proof of \Cref{thm:schaefer} as stated in \cite{schaefer}, one quickly obtains projection-universality of the \NP-complete $SAT(S)$ and $SAT_C(S)$ problems:
\begin{lemma}\label{lem:CSPprojectionUniversal}
    Let $S$ be a finite set of logical relations. If \Cref{thm:schaefer} classifies $SAT(S)$ ($SAT_C(S)$, respectively) as \NP-complete, then it is projection-universal.
\end{lemma}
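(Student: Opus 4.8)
The plan is to leverage the structure of Schaefer's completeness proof, which shows that when $SAT(S)$ is $\mathsf{NP}$-complete, the relations in $S$ are expressive enough to simulate the 3-SAT clauses (possibly with the help of constants, or of existentially quantified auxiliary variables). Concretely, Schaefer shows that from an $\mathsf{NP}$-hard set $S$ one can, via the Galois-connection / co-clone machinery, express each ternary OR-clause $(z_1 \vee z_2 \vee z_3)$ as a conjunction of constraints from $S$ applied to the original variables together with some fresh existentially quantified variables (and, in the case of $SAT_C(S)$, the constants $0,1$; in the case of $SAT(S)$ without constants, the weaker conditions 1.--2. fail as well, so one also has access to simulate the constants or at least to push the argument through as Schaefer does). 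I would begin by recalling precisely this expressibility result from \cite{schaefer}: for every 3-CNF formula $\Psi$ there is an instance $\Phi$ of $SAT(S)$ on variable set $\{x_1,\dots,x_d\}\cup\{w_1,\dots,w_m\}$ such that $x\in\{0,1\}^d$ extends to a solution of $\Psi$ iff there is $w\in\{0,1\}^m$ with $(x,w)\in\disc{\Phi}$.

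The second step is to combine this with \Cref{lem:3SATprojectionuniversal}: given a closed and bounded semi-algebraic set $S$, first pick a 3-SAT formula $\Psi'$ (on variables $\{x\}\cup\{y\}$) and a projection set $D_y$ with $\induc{\pi_{D_y}(\disc{\Psi'})}$ homeomorphic to $S$; then apply Schaefer's simulation to each clause of $\Psi'$, introducing fresh variables $w$, to obtain an instance $\Phi$ of $SAT(S)$ (or $SAT_C(S)$) with the property that $\pi_{D_w}(\disc{\Phi}) = \disc{\Psi'}$, where $D_w$ is the set of dimensions corresponding to the auxiliary variables $w$. Composing the two projections, $\pi_{D_y \cup D_w}(\disc{\Phi}) = \pi_{D_y}(\disc{\Psi'})$, so $\induc{\pi_{D_y \cup D_w}(\disc{\Phi})}$ is homeomorphic to $S$, which is exactly projection-universality.

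The one point that needs genuine care — and which I expect to be the main obstacle — is the exact bookkeeping in Schaefer's clause simulation: one must check that it produces, for each fixed assignment $x$ to the original variables, a set of satisfying assignments to the auxiliary variables $w$ that is \emph{nonempty precisely when} $x$ satisfies the clause, with no spurious correlations between the $w$-variables of different clauses and the original $x$-variables beyond what the clause dictates (this is what guarantees $\pi_{D_w}(\disc{\Phi})=\disc{\Psi'}$ rather than some larger or smaller set). For $SAT_C(S)$ this is immediate from Schaefer's construction since constants are freely available; for $SAT(S)$ without constants one must also recall Schaefer's trick for handling the absence of constants in the $\mathsf{NP}$-complete case (where neither 0-validity nor 1-validity holds, so the constants can themselves be simulated). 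A subtlety worth flagging is that $\disc{\Phi}$ here lives on the variable set before projecting, so the vertices of $\disc{\Phi}$ are exactly the pairs $(x,w)$ with $x \in \disc{\Psi'}$ and $w$ ranging over the (nonempty) simulation witnesses — and since the projection $\pi_{D_w}$ acts on the discrete set first, it suffices that each such fiber is nonempty, with no constraint on its topology at all. Once this is confirmed the lemma follows, and I would state explicitly that the full argument for homotopy-universality is deferred (it does not hold in general, cf.\ \oneinthree), so only the projection version is claimed here.
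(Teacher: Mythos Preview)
Your proposal is correct and follows essentially the same approach as the paper: invoke Schaefer's reduction from 3-SAT to $SAT(S)$, which for every 3-SAT formula $\Psi'$ produces an instance $\Phi\in SAT(S)$ with auxiliary variables whose projection onto the original variables equals $\disc{\Psi'}$, then chain this projection with the one from \Cref{lem:3SATprojectionuniversal}. The paper's proof is terser (it simply cites \cite{schaefer} for the needed expressibility property and notes $\pi_D\circ\pi_{D'}=\pi_{D\cup D'}$), while you spell out more of the bookkeeping and the constant-simulation caveat, but the argument is the same.
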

\begin{proof}
    We provide the proof for the $SAT(S)$ case, however the proof for $SAT_C(S)$ works in exactly the same way.
    
    In the proof of \NP-hardness in \cite{schaefer}, it is proven that for every 3-SAT formula $\Phi$ on the variables $x_1,\ldots,x_d$ there exists a formula $\Psi\in SAT(S)$ on the variables $x_1,\ldots,x_d,y_1,\ldots,y_k$, such that $x_1,\ldots,x_d$ solves $\Phi$ if and only if there exist $y_1,\ldots,y_k$, such that $x_1,\ldots,x_d,y_1,\ldots,y_k$ solves $\Psi$. Thus, letting $D$ be the set of dimensions corresponding to $y_1,\ldots,y_k$, $\pi_D(\disc{\Psi})=\disc{\Phi}$. Since 3-SAT is projection-universal, and since projections can be chained ($\pi_D\circ\pi_{D'}=\pi_{D\cup D'}$), it follows that $SAT(S)$ is projection-universal.
\end{proof}
This finishes the proof of the ``if'' direction of our main result, \Cref{thm:main}. While \Cref{lem:CSPprojectionUniversal} does not hold for homotopy-universality (as we will prove in \Cref{sec:nonuniversality}), a homotopy-universality result does hold for a sub-class of CSPs. We prove the ``if'' direction of \Cref{thm:CNFdichotomy}:
\begin{lemma}\label{lem:CNFUniversal}
    For $k,p,n$ with $p,n\leq k\leq n+p$, let \kpnsat{k}{p}{n} be the CSP defined by the constraints expressing all possible disjunctions of at most $k$ literals, which contain at most $p$ positive literals and at most $n$ negative literals. Then, \kpnsat{k}{p}{n} is homotopy-universal if it is classified as \NP-complete by \Cref{thm:schaefer}.
\end{lemma}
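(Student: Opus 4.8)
The plan is to mimic the proof of homotopy-universality of 3-SAT (\Cref{lem:3SAThomotopyuniversal}) but replace the classical CNF-to-3-CNF gadget with a gadget that stays inside the constraint language of \kpnsat{k}{p}{n}. First I would determine exactly which triples $(k,p,n)$ are \NP-complete according to \Cref{thm:schaefer}: since \kpnsat{k}{p}{n} contains all clauses with up to $p$ positive and up to $n$ negative literals (and at most $k$ overall), the only way Schaefer's conditions can all be met is if the language is trivial in one of the six ways — e.g.\ $p=0$ forces every clause to be negative hence Horn and $0$-valid, $n=0$ is dual-Horn and $1$-valid, and $k\le 2$ is 2-SAT; conversely as soon as we have $p\ge 1$, $n\ge 1$ and $k\ge 3$ we can build a clause like $(x\vee y\vee\neg z)$, and together with, say, the monotone clauses this already escapes all six classes (one checks it is not $0$- or $1$-valid, not Horn, not dual-Horn, not affine, not 2-SAT). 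So the \NP-complete case is precisely $k\ge 3$, $p\ge 1$, $n\ge 1$.

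Next, starting from the homeomorphism-universality of SAT (\Cref{lem:SAThomeomorphism}), I take a CNF formula $\Psi$ whose induced cubical complex is homeomorphic to the target semi-algebraic set $S$, and rewrite each long clause $C=z_1\vee\dots\vee z_m$ into a conjunction of clauses each allowed in \kpnsat{k}{p}{n}, using fresh variables $y_j$. The key observation is that the standard chaining gadget
\[(z_1\vee z_2\vee y_1)\wedge(\neg y_1\vee z_3\vee y_2)\wedge\dots\wedge(\neg y_{m-3}\vee z_{m-1}\vee z_m)\]
already has width $3\le k$; I just need to check the positive/negative budgets. The interior clauses have the form $(\neg y_{j-1}\vee z_j\vee y_j)$: the literal $z_j$ contributes one positive or one negative literal depending on $z_j$'s sign, so worst case one clause has $2$ positive literals (when $z_j$ is positive and $y_j$ positive) or $2$ negative literals. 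Since we are in the case $p\ge1$, $n\ge1$, $k\ge3$, if either $p\ge2$ or $n\ge2$ the gadget as-is already fits; in the remaining corner case $p=n=1$ (so only clauses $(\text{pos}\vee\text{neg})$, $(\text{pos})$, $(\text{neg})$ of width $\le k$, $k\ge 3$, but width at most $2$ effectively — actually $p=n=1$ forces $k\le 2$, contradiction, so this corner does not arise). Thus after a short case analysis the gadget always lies in the language; one may also use a symmetric variant $(\neg y_{j-1}\vee y_j\vee z_j)$ or split a bad clause further to balance budgets if needed.

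Finally, having produced $\Phi\in\text{\kpnsat{k}{p}{n}}$ with $\pi_D(\disc{\Phi})=\disc{\Psi}$ where $D$ indexes the $y$-variables, I invoke verbatim the argument of \Cref{lem:3SAThomotopyuniversal}: the coordinate projection $f\colon\induc{\disc{\Phi}}\to\induc{\pi_D(\disc{\Phi})}$ is continuous, proper and surjective, and for each point $x$ the fiber $f^{-1}(x)$ is the induced complex of those $y$-assignments compatible with every vertex of the minimal face containing $x$; splitting the chain of gadget-clauses into already-satisfied ($F^*$) and not-yet-satisfied ($F^-$) blocks shows each fiber is homeomorphic to a product of paths, hence contractible, so \Cref{lem:smale} applied componentwise gives $\induc{\disc{\Phi}}\homeo\induc{\disc{\Psi}}$ up to homotopy, which is homotopy equivalent (indeed homeomorphic) to $S$. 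The main obstacle is really just the bookkeeping in the gadget: ensuring that the rewriting of an arbitrary-width clause never exceeds the positive-literal budget $p$ or the negative-literal budget $n$ simultaneously with the width bound $k$, which is why pinning down the \NP-complete regime $p,n\ge1$, $k\ge3$ first is essential — it is exactly the regime in which enough slack exists for the gadget to be expressible.
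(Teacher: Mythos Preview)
Your classification of the \NP-complete regime is incorrect. You claim it is exactly $k\ge 3$, $p\ge 1$, $n\ge 1$, but if $p=1$ then every allowed clause has at most one positive literal, so the entire language is Horn and hence polynomial-time solvable by \Cref{thm:schaefer}; symmetrically $n=1$ gives Dual-Horn. The correct regime is $k\ge 3$, $p\ge 2$, $n\ge 2$. This matters because the only case that is not already subsumed by 3-SAT (namely $k,p,n\ge 3$) is precisely $p=2$ or $n=2$, and that is where your gadget breaks.

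Concretely, in the standard chain gadget the \emph{endpoint} clauses $(z_1\vee z_2\vee y_1)$ and $(\neg y_{m-3}\vee z_{m-1}\vee z_m)$ can contain three positive or three negative literals, which violates $p=2$ or $n=2$; your analysis only looked at interior clauses, and even there ``either $p\ge 2$ or $n\ge 2$'' is not enough --- you need both, since interior clauses can be $2$-positive/$1$-negative or $1$-positive/$2$-negative depending on the sign of $z_j$. The hand-wave ``split a bad clause further to balance budgets if needed'' is exactly the missing idea, but you do not say how. The paper avoids this by reducing from 3-SAT (not full SAT), so clauses already have width $3$, and then translating only the all-positive clause $(x\vee y\vee z)$ into $(x\vee y\vee\neg\alpha)\wedge(\alpha\vee z)$ (and symmetrically for all-negative clauses), which lives in \kpnsat{3}{2}{2}; the fiber-contractibility argument then goes through as you sketched. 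Your overall strategy is the same as the paper's, but the specific gadget and the case analysis need to be fixed along these lines.
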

\begin{proof}
    If $k\leq 2$, $p\leq 1$, or $n\leq 1$, \kpnsat{k}{p}{n} is expressible in 2-SAT, Horn SAT, or Dual-Horn SAT, and thus classified as polynomial-time solvable. Otherwise, if $k,p,n\geq 3$, \kpnsat{k}{p}{n} contains the homotopy-universal 3-SAT and is thus also homotopy-universal.

    Thus, the only missing cases are $k\geq 3$, $p=2$, $n\geq \max (2,k-2)$. We prove homotopy-universality for the simplest of these cases, namely \kpnsat{3}{2}{2}, which is contained by all other such cases. We prove this by translating every 3-SAT formula $\Phi$ into a \kpnsat{3}{2}{2} formula $\Psi$ with homotopy equivalent solution space. To achieve this, we simply translate a clause of the form $(x\vee y\vee z)$ into the formula $(x\vee y\vee \neg\alpha)\wedge (\alpha\vee z)$ for a new variable $\alpha$. We apply the symmetric construction to clauses of the form $(\neg x\vee \neg y\vee \neg z)$. Homotopy equivalence of $\induc{\disc{\Phi}}$ and $\induc{\disc{\Psi}}$ follows from exactly the same arguments as in the proof of \Cref{lem:3SAThomotopyuniversal}
\end{proof}

\section{Absence of Topological Universality}\label{sec:nonuniversality}
To prove the ``only if'' part of \Cref{thm:main}, we first show that for each of the six cases of \Cref{thm:schaefer}, $SAT(S)$ (and for cases 3.--6., $SAT_C(S)$) is not homotopy-universal. Since by definition $SAT_C(S)\supseteq SAT(S)$, non-universality of $SAT_C(S)$ implies non-universality of $SAT(S)$. We then extend this to projection-universality in \Cref{sec:nonuniversalityProjection}. 

\subsection{Absence of Homotopy-Universality}\label{sec:nonuniversalityHomotopy}
We begin with the easy cases 1. ($0$-valid), 2. ($1$-valid), in which case we only need to show non-universality for $SAT(S)$, as $SAT_C(S)$ is classified as \NP-complete by \Cref{thm:schaefer} for these cases.
\begin{lemma}\label{lem:nonuniversal0valid}
    For a finite set $S$ of $0$-valid logical relations, $SAT(S)$ is \emph{not} homotopy-universal.
\end{lemma}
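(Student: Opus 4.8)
The plan is to exploit the fact that $0$-valid constraints force a universal solution to exist. Concretely, I would first observe that if every $R\in S$ is $0$-valid, then for \emph{every} instance $\Phi$ of $SAT(S)$ of dimension $d$, the all-zeros assignment $(0,\ldots,0)\in\{0,1\}^d$ satisfies every constraint of $\Phi$ simultaneously, simply because each constraint $R_{j_i}$ evaluated at all-zero arguments lies in $R_{j_i}$ by $0$-validity. Hence $(0,\ldots,0)\in\disc{\Phi}$, and in particular $\disc{\Phi}\neq\emptyset$, so the induced cubical complex $\induc{\disc{\Phi}}$ contains at least the vertex $(0,\ldots,0)$ and is therefore non-empty.

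Next I would recall that the empty set $\emptyset$ is itself a closed and bounded semi-algebraic set (e.g. as a subset of $\R^0$, or cut out by $1=0$ in any $\R^d$). A space is homotopy equivalent to $\emptyset$ if and only if it is itself empty, since any map to or from the empty space forces the other space to be empty. Combining this with the previous paragraph, no instance $\Phi$ of $SAT(S)$ can have $\induc{\disc{\Phi}}$ homotopy equivalent to $\emptyset$, and therefore $SAT(S)$ fails to be homotopy-universal.

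There is essentially no obstacle here: the argument is a one-line observation once the definitions are unwound, and the only subtlety is making sure that the notion of homotopy-universality indeed ranges over \emph{all} closed bounded semi-algebraic sets including the empty one, so that the non-emptiness of every solution complex is a genuine obstruction. (The same remark underlies why $SAT_C(S)$ is not excluded by this argument: adding constants breaks $0$-validity, which is exactly why Schaefer's theorem classifies $SAT_C(S)$ as \NP-complete in this case.) I would phrase the write-up to emphasize that the entire content is the persistence of the all-zeros solution, and that this already defeats homotopy-universality without any topological machinery.
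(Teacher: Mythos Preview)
Your proposal is correct and matches the paper's own proof essentially verbatim: the paper also observes that $(0,\ldots,0)\in\disc{\Phi}$ for every instance, notes that the empty set is a semi-algebraic set, and concludes non-universality since a non-empty space cannot be homotopy equivalent to~$\emptyset$. Your write-up is simply a more detailed unpacking of the same one-line argument.
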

\begin{proof}
    The empty set is a semi-algebraic set. However, for any instance $\Phi$ of $SAT(S)$, $(0,\ldots,0)\in \disc{\Phi}$. Since a non-empty set cannot be homotopy equivalent to the empty set, $SAT(S)$ is not homotopy-universal.
\end{proof}
Analogously, we get the same result for $1$-valid logical relations.
\begin{lemma}\label{lem:nonuniversal1valid}
    For a finite set $S$ of $1$-valid logical relations, $SAT(S)$ is \emph{not} homotopy-universal.
\end{lemma}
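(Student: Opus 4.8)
The plan is to mirror the argument for the $0$-valid case (\Cref{lem:nonuniversal0valid}) verbatim, swapping the all-zeros vertex for the all-ones vertex. The key observation is that if every relation $R \in S$ is $1$-valid, then for any instance $\Phi$ of $SAT(S)$ the all-ones assignment $(1,\ldots,1)$ satisfies every constraint $R_{j_i}(x_{i,1},\ldots,x_{i,rk(R_{j_i})})$ simultaneously, since plugging in $1$ for all the variables $x_1,\ldots,x_d$ forces each tuple fed into each constraint to be the all-ones tuple of the appropriate rank, which lies in $R_{j_i}$ by $1$-validity. Hence $(1,\ldots,1) \in \disc{\Phi}$, so $\disc{\Phi}$ is non-empty, and therefore $\induc{\disc{\Phi}}$ is a non-empty topological space.

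Since the empty set is a (closed and bounded) semi-algebraic set and no non-empty space is homotopy equivalent to it, no instance of $SAT(S)$ can have solution space homotopy equivalent to $\emptyset$; thus $SAT(S)$ is not homotopy-universal. I do not anticipate any obstacle here: the only mild point to state carefully is why the all-ones assignment satisfies the whole conjunction and not just a single constraint, which is immediate from the fact that each constraint is applied to variables drawn from $\{x_1,\ldots,x_d\}$ and $1$-validity covers the all-ones tuple of every rank. Note also that, as remarked in the excerpt, we only need this for $SAT(S)$ and not $SAT_C(S)$, since the latter is $\mathsf{NP}$-complete by \Cref{thm:schaefer} when $S$ is $1$-valid.
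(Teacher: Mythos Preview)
Your proposal is correct and matches the paper's own approach exactly: the paper simply states that the $1$-valid case is analogous to \Cref{lem:nonuniversal0valid}, and your write-up spells out precisely that analogy by replacing the all-zeros assignment with the all-ones assignment and invoking the empty semi-algebraic set as the witness of non-universality.
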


The remaining cases are classified as \NP-complete for both $SAT_C(S)$ and $SAT(S)$. Recall that non-universality of $SAT_C(S)$ implies non-universality of $SAT(S)$, we thus only show the former. First, we handle case 6. (affine).
\begin{lemma}\label{lem:nonuniversalLinear}
     For a finite set $S$ of affine logical relations, $SAT_C(S)$ is \emph{not} homotopy-universal.
\end{lemma}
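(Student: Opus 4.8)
The plan is to analyze the structure of the solution space of an affine CSP directly. An affine logical relation is the solution set of a system of affine equations over $GF(2)$; a conjunction of such constraints (together with the constants $0$ and $1$, which just fix some coordinates) is again a system of affine equations over $GF(2)$ in the $d$ variables. So for any instance $\Phi$ of $SAT_C(S)$, the discrete solution space $\disc{\Phi}$ is an affine subspace of $GF(2)^d$, i.e., either empty or a coset $v_0 + W$ of a linear subspace $W \leq GF(2)^d$.

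The key observation I would make is that such a coset, viewed as a set of hypercube vertices, induces a cubical complex that is a disjoint union of faces of $[0,1]^d$, each face of the same dimension $m := \dim W$. Concretely, I would argue as follows. Suppose $u, u' \in v_0 + W$ differ in exactly one coordinate $j$; then the vertex obtained from $u$ by flipping coordinate $j$ is $u'$, which lies in the solution space, so the edge between them is present, and more generally if $u, u'$ agree outside a coordinate set $T$ and every vertex obtained by flipping a subset of $T$ lies in the space, the whole face spanned by $T$ is filled in. I would then show that the connected components of $\induc{\disc{\Phi}}$ are exactly the faces $\{x : x_j = (v_0)_j \text{ for } j \notin \text{supp}(W)\}$ restricted appropriately — more carefully, the component containing $v_0$ is the set of vertices reachable by flipping coordinates along elements of $W$, and because $W$ is a linear space this is a subcube precisely when $W$ is spanned by a subset of standard basis vectors, which need not hold. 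So I should instead argue that even if $W$ is not coordinate-aligned, the induced complex on a single coset is still a disjoint union of points (0-dimensional cubes): two vertices of the same coset are joined by an edge in $\induc{\disc{\Phi}}$ only if they differ in one coordinate, i.e. only if $e_j \in W$ for that coordinate $j$; after collapsing these "free" coordinates one sees $\induc{\disc{\Phi}}$ is a disjoint union of cubes of dimension $\dim(W \cap \langle e_1, \ldots, e_d\rangle\text{-aligned part})$. Hence $\induc{\disc{\Phi}}$ is homotopy equivalent to a finite discrete set of points, i.e. to a space with free homology concentrated in degree $0$.

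To finish, I would invoke a closed and bounded semi-algebraic set $S$ with nontrivial higher homology (or even just one that is connected but not contractible, such as the circle $S^1$ from \Cref{fig:S1}, which is closed and bounded and semi-algebraic). Since every $\induc{\disc{\Phi}}$ for $\Phi \in SAT_C(S)$ is a disjoint union of contractible cubes, it has $\tilde H_p = 0$ for all $p \geq 1$, and in particular cannot be homotopy equivalent to $S^1$ (whose $H_1$ is $\mathbb{Z}$). Therefore $SAT_C(S)$ is not homotopy-universal.

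The main obstacle is the geometric claim that the induced cubical complex of an affine coset is a disjoint union of cubes; the subtlety is that the solution space is an affine subspace of $GF(2)^d$ that need not be "axis-aligned", so one has to be careful about which faces of the hypercube actually get filled in. The clean way around this is to note that a face $f$ of $[0,1]^d$ is in $\induc{\disc{\Phi}}$ iff all its vertices satisfy the affine system, and the vertices of a face $f$ spanned by a coordinate set $T$ based at $u$ form an affine subspace of $GF(2)^d$ iff $\langle e_t : t \in T\rangle$ is contained in the direction space $W$ (using that affine systems are preserved under such "free" flips exactly when the corresponding linear equations have zero coefficients on $T$). Thus each maximal face through $u$ is a cube of dimension $\dim(W \cap \mathrm{span}\{e_i\})$-many coordinates, these maximal faces are disjoint across distinct cosets-mod-that-subspace, and contractibility of each follows immediately. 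Once this structural lemma is in hand, the non-universality conclusion is immediate.
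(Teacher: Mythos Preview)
Your argument is correct and reaches the same conclusion as the paper: the induced cubical complex of an affine instance is a disjoint union of cubes, hence homotopy equivalent to a finite discrete space, which rules out targets like $S^1$. The route differs only in a small but worthwhile simplification. The paper first observes that any variable not occurring in a constraint can be projected away by a deformation retract; after this reduction \emph{every} variable appears in some affine equation, so flipping any single coordinate violates that equation. Hence no edge of the hypercube survives, and the complex is $0$-dimensional outright. This bypasses the analysis of the ``free'' coordinate set $F=\{j:e_j\in W\}$ that you carry out. Your direct treatment of the general case is fine and gives the same structural picture (a disjoint union of $|F|$-dimensional cubes), but the paper's WLOG step makes the proof a two-line observation rather than a structural lemma about cosets. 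One minor wording issue: the vertices of any face always form an affine subspace of $GF(2)^d$; what you mean is that they all lie in $v_0+W$ iff $\langle e_t:t\in T\rangle\subseteq W$, which is the correct criterion.
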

\begin{proof}
    We can represent every formula in $SAT_C(S)$ by a formula $\Phi$ where every constraint of $\Phi$ is of the form $x_1\oplus\ldots\oplus x_k=c$ for $c\in\{0,1\}$ and $x_1,\ldots,x_k$ are all distinct. Let $n$ be the dimension of $\Phi$. Let us consider $\induc{\disc{\Phi}}$. Without loss of generality, every variable $x_i$ for $i\in [n]$ occurs in some constraint of $\Phi$: otherwise, the projection in these dimensions is a deformation retract, which preserves homotopy equivalence. If $x\in\{0,1\}^n$ is a solution to $\Phi$, we know that no neighboring assignment $x'$ (with $x'_j=x_j$ for all $j\neq i$ for some $i$) can be a solution, since every constraint in which $x_i$ occurs must be violated by $x'$. Thus, $\induc{\disc{\Phi}}$ contains no edges and thus also not any faces of larger dimension. We conclude that for every $\Phi$, $\induc{\disc{\Phi}}$ is homotopy equivalent to a discrete topological space.
\end{proof}

For the remaining three cases, 3. (Horn SAT), 4. (Dual-Horn SAT), and 5. (2-SAT), we aim to prove that the solution space $\induc{\disc{\Phi}}$ of every instance $\Phi$ has trivial homology groups for all dimensions $p\geq 1$. Without loss of generality, we assume that each constraint in $\Phi$ is actually a single clause (disjunction) of the respective SAT variant. Note that disjunctions containing a constant~$1$ are trivially satisfied and can be removed. Similarly, disjunctions containing a constant~$0$ are equivalent to the same disjunction with the constant removed. We thus only need to consider formulae without constants.

To prove the absence of homotopy-universality, we run one single proof by induction, where only the base case has to be proven separately for each of the three variants. The proof makes heavy use of a theorem due to Mayer and Vietoris~\cite{Vietoris1930}, with a more modern version due to Eilenberg and Steenrod~\cite{MayerVietorisModern} known as the \emph{Mayer-Vietoris sequence}. We first need to define exact sequences:

\begin{definition}[Exact sequence]
An \emph{exact sequence} $G_0\overset{f_1}{\rightarrow}G_1\overset{f_2}{\rightarrow}\ldots\overset{f_n}{\rightarrow}G_n$ is a sequence of groups $G_0,\ldots,G_n$ and homomorphisms $f_1,\ldots,f_n$ with $f_i:G_{i-1}\rightarrow G_i$, such that for all $1\leq i\leq n-1$, $\im{f_i}=\ker{f_{i+1}}$.
\end{definition}

We are now ready to state our main tool:

\begin{theorem}[Mayer-Vietoris sequence]\label{thm:mayervietoris}
    Let $X$ be a topological space, and $A,B\subseteq X$ be two subspaces whose interiors cover~$X$ (the interiors do not need to be disjoint). Then, there is an exact sequence
\[ \ldots\rightarrow H_{p+1}(A)\oplus H_{p+1}(B) \rightarrow H_{p+1}(A\cup B) \rightarrow H_p(A\cap B)\rightarrow H_p(A)\oplus H_p(B) \rightarrow \ldots \rightarrow H_0(A\cup B) .\]
\end{theorem}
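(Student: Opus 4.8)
The plan is to deduce this from the standard machinery of singular homology, so that the only genuine work lies in setting up the correct short exact sequence of chain complexes and invoking the zig-zag lemma. Write $\mathcal{U}=\{A,B\}$ and let $C_*^{\mathcal{U}}(X)$ be the subcomplex of the singular chain complex $C_*(X)$ spanned by those singular simplices whose image lies entirely in $A$ or entirely in $B$. The first key step is the \emph{small-simplices theorem}: since $\int A$ and $\int B$ cover $X$ (which forces $X=A\cup B$), iterated barycentric subdivision eventually carries any singular simplex into a member of $\mathcal{U}$, and the subdivision operator is chain-homotopic to the identity; hence the inclusion $C_*^{\mathcal{U}}(X)\hookrightarrow C_*(X)$ induces isomorphisms $H_p(C_*^{\mathcal{U}}(X))\cong H_p(X)=H_p(A\cup B)$ for every $p$.

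Next I would write down the short exact sequence of chain complexes
\[ 0 \longrightarrow C_*(A\cap B) \xrightarrow{\ \iota\ } C_*(A)\oplus C_*(B) \xrightarrow{\ \sigma\ } C_*^{\mathcal{U}}(X) \longrightarrow 0, \]
where $\iota(c)=(c,-c)$ and $\sigma(a,b)=a+b$, with the evident inclusions of chains understood. Injectivity of $\iota$ is immediate; surjectivity of $\sigma$ is exactly the defining property of $C_*^{\mathcal{U}}(X)$; and $\ker\sigma=\im\iota$ because $a+b=0$ forces $b=-a$, so the chain $a$ lies simultaneously in $A$ and in $B$, hence in $A\cap B$. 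All three maps are chain maps, so this really is a short exact sequence of chain complexes, and applying the zig-zag (snake) lemma yields the long exact homology sequence
\[ \cdots \to H_{p+1}(C_*^{\mathcal{U}}(X)) \to H_p(A\cap B) \to H_p(A)\oplus H_p(B) \to H_p(C_*^{\mathcal{U}}(X)) \to \cdots \to H_0(C_*^{\mathcal{U}}(X)). \]
Substituting $H_p(C_*^{\mathcal{U}}(X))\cong H_p(A\cup B)$ from the first step gives the claimed sequence; it terminates at $H_0(A\cup B)$ because $H_{-1}$ vanishes. One should also record that the connecting homomorphism $H_{p+1}(A\cup B)\to H_p(A\cap B)$ produced by the snake lemma is the usual one (split a cycle as a sum of a chain in $A$ and a chain in $B$, and take the boundary of either piece), though any choice compatible with the zig-zag lemma is enough for our applications.

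The main obstacle is the small-simplices theorem used in the first paragraph: one must show that barycentric subdivision, iterated sufficiently often, shrinks every singular simplex into $\int A$ or $\int B$ — a Lebesgue-number argument applied to the pullback cover of the standard simplex — and construct an explicit chain homotopy witnessing that the subdivision operator is homotopic to the identity. Everything after that point is purely formal homological algebra. Since this is a classical result, in the paper we simply cite \cite{Vietoris1930,MayerVietorisModern}; the above is how one would establish it from first principles.
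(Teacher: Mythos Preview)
Your sketch is the standard textbook proof of the Mayer--Vietoris sequence and is correct: set up the short exact sequence of chain complexes using $\mathcal{U}$-small chains, invoke the small-simplices theorem to identify $H_p(C_*^{\mathcal{U}}(X))$ with $H_p(A\cup B)$, and then run the zig-zag lemma. As you yourself note at the end, the paper does not prove this statement at all; it is quoted as a classical tool with citations to \cite{Vietoris1930,MayerVietorisModern}, so there is no ``paper's own proof'' to compare against. Your write-up would serve perfectly well as a self-contained justification if one were desired.
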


Here, $G\oplus H$ denotes the direct sum of groups. Note that $0\oplus 0\isomorph 0$.
We use the following simple and well-known fact:
\begin{fact}\label{lem:isomorphInSequence}
    Let $0\overset{e}{\rightarrow}A\overset{f}{\rightarrow}B\overset{g}{\rightarrow}0$ be an exact sequence. Then, $A\isomorph B$.
\end{fact}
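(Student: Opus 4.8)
The plan is to unpack the definition of exactness at each of the two interior objects $A$ and $B$ and observe that it forces the single map $f$ to be both injective and surjective, hence an isomorphism. Concretely, I would name the two maps at the ends $e\colon 0\to A$ and $g\colon B\to 0$; these are of course the unique (zero) homomorphisms, but their only role is to make the exactness conditions at $A$ and at $B$ meaningful.

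First I would read off exactness at $A$: by definition $\im e=\ker f$. Since $e$ is the homomorphism from the trivial group, $\im e=0$, so $\ker f=0$, i.e.\ $f$ is injective. Next I would read off exactness at $B$: by definition $\im f=\ker g$. Since $g$ maps into the trivial group, $\ker g=B$, so $\im f=B$, i.e.\ $f$ is surjective. A homomorphism that is both injective and surjective is an isomorphism, so $f\colon A\to B$ witnesses $A\isomorph B$.

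There is essentially no obstacle here — this is the standard ``short exact sequence with trivial ends'' observation from homological algebra, and the only thing to be careful about is citing the definition of exact sequence given just above (matching $\im f_i=\ker f_{i+1}$ for the two internal indices) rather than appealing to anything deeper. If desired, one could add a one-line remark that this is exactly the instance of \Cref{thm:mayervietoris} that will be used later: whenever $H_p(A\cap B)$ sits between two vanishing terms $H_p(A)\oplus H_p(B)$ and $H_{p-1}(A)\oplus H_{p-1}(B)$ (or similar), the connecting maps assemble into such a short exact sequence, yielding an isomorphism of the middle homology group with a neighbouring one.
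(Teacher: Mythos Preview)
Your proof is correct and is exactly the standard argument. The paper itself does not supply a proof of this fact at all---it is stated as a ``simple and well-known fact'' and used without justification---so there is nothing to compare against; your write-up would serve perfectly well as the missing proof.

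One small remark on the optional aside: in the paper's actual application (the proof of \Cref{lem:cupcap}), the two \emph{middle} terms of the four-term exact sequence are $H_{p+k+1}(X\cup Y)$ and $H_{p+k}(X\cap Y)$, and it is the flanking direct-sum terms $H_{p+k+1}(X)\oplus H_{p+k+1}(Y)$ and $H_{p+k}(X)\oplus H_{p+k}(Y)$ that are shown to vanish. So your parenthetical description of which groups sit where is slightly off, but this does not affect the proof of the fact itself.
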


Given a sentence $\Phi=\exists x_1,\ldots,x_d:C_1\wedge\ldots\wedge C_n$, we consider the solution spaces of each of the clauses $C_i$ individually. The \emph{wedge} $W_i$ is the solution space $\induc{\disc{\Phi_i}}$ of $\Phi_i:=\exists x_1,\ldots,x_d:C_i$. Note that $\induc{\disc{\Phi}}=\bigcap_{i=1}^n W_i$.
We first claim that in the three considered settings, \emph{unions} of wedges have very limited limited topological complexity. This is the only part of the proof that needs to be shown for each setting separately. We defer the proof of this lemma to the end of the section.
\begin{lemma}\label{lem:simpleunions}
    Given $n$ wedges $W_1,\ldots,W_{n}$ of some 2-SAT, Horn SAT, or Dual-Horn SAT formula $\Phi$ with $n'\geq n$ clauses, we have that
    \[H_p\big(\bigcup_{i=1}^{n}W_i\big)\isomorph 0, \text{ for all }p\geq n.\]
\end{lemma}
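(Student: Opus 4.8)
The statement is about unions of wedges, so the natural tool is the Mayer–Vietoris sequence applied inductively on the number $n$ of wedges, peeling off one wedge at a time. Write $U_j := \bigcup_{i=1}^{j} W_i$, so we want $H_p(U_n) \isomorph 0$ for all $p \geq n$. The Mayer–Vietoris sequence for $A = U_{n-1}$ and $B = W_n$ (whose interiors in $U_n$ cover $U_n$, since these are subcomplexes of a cubical complex) gives
\[
\ldots \to H_p(U_{n-1}) \oplus H_p(W_n) \to H_p(U_n) \to H_{p-1}(U_{n-1} \cap W_n) \to \ldots
\]
For $p \geq n$, the left term $H_p(U_{n-1})$ vanishes by the inductive hypothesis (since $p \geq n > n-1$), and $H_p(W_n)$ vanishes for $p \geq 1$ because a single wedge $W_n$, being the induced complex of the solutions of one clause, is contractible (or at least has trivial higher homology — this will have to be pinned down, but it is the $n=1$ base case that the paper says is handled per-variant). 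So the sequence forces $H_p(U_n) \hookrightarrow H_{p-1}(U_{n-1} \cap W_n)$ to be injective, and it suffices to show $H_{p-1}(U_{n-1} \cap W_n) \isomorph 0$ for $p \geq n$, i.e. $H_q(U_{n-1} \cap W_n) \isomorph 0$ for $q \geq n-1$.

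**The key reduction.** The crucial observation is that $U_{n-1} \cap W_n = \big(\bigcup_{i=1}^{n-1} W_i\big) \cap W_n = \bigcup_{i=1}^{n-1} (W_i \cap W_n)$, a union of $n-1$ sets. I would argue that each $W_i \cap W_n$ is itself (homotopy equivalent to, or equal to) a wedge of a 2-SAT / Horn / Dual-Horn formula — or more precisely, that the whole union $\bigcup_{i=1}^{n-1}(W_i \cap W_n)$ is the solution space of a formula in the same class with at least $n-1$ clauses. The intersection of the solution spaces of the clauses $C_i$ and $C_n$ is the solution space of $C_i \wedge C_n$; but $C_i \wedge C_n$ is a conjunction of two clauses, not a single clause. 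The resolution should be that within the induced-cubical-complex formalism, the relevant union $\bigcup_i (W_i \cap W_n)$ is the induced complex of $\disc{C_n} \cap \bigcup_i \disc{C_i}$, and one massages this into the solution space of a 2-SAT/Horn/Dual-Horn formula with $n-1$ clauses by fixing the variables of $C_n$ appropriately (conditioning on $C_n$ being satisfied restricts to a subcube or face, and on each such face the remaining constraints are still in the same class). Then the inductive hypothesis applied to this $(n-1)$-clause instance gives $H_q \isomorph 0$ for $q \geq n-1$, closing the induction.

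**The base case and the obstacle.** The base case $n = 1$ — a single wedge $W_1 = \induc{\disc{\exists \vec x : C_1}}$ has $H_p(W_1) \isomorph 0$ for all $p \geq 1$ — is exactly the part the text says "needs to be shown for each setting separately," and the excerpt defers it (it will presumably be the content of a lemma proved at the end of the section), so I would cite that. The main obstacle is the reduction step above: making precise that $U_{n-1} \cap W_n$ is, up to homotopy or homeomorphism, the solution space of an $(n-1)$-clause instance \emph{in the same class} (2-SAT stays 2-SAT, Horn stays Horn, Dual-Horn stays Dual-Horn under restriction to a face). For 2-SAT this is clear since a clause with $\leq 2$ literals, after substituting fixed values, remains a clause with $\leq 2$ literals (possibly becoming trivial or forcing a variable, which only shrinks the complex to a face). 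For Horn and Dual-Horn the same substitution argument works because fixing a variable to $0$ or $1$ in a Horn clause yields a Horn clause (or a trivial/unit clause). One subtlety to handle carefully is that $\induc{\cdot}$ of a union need not be the union of the $\induc{\cdot}$'s in general, but \Cref{lem:semialgebraicsetasinducedcomplex}'s proof already records the needed compatibility, and since all the $W_i$ are subcomplexes of a common cube, $\bigcup W_i$ and $\bigcap W_i$ behave well. I would therefore structure the write-up as: (1) set up $U_j$ and the Mayer–Vietoris sequence; (2) invoke the base case and the inductive hypothesis to reduce to $H_q(U_{n-1}\cap W_n)\isomorph 0$ for $q\ge n-1$; (3) show $U_{n-1}\cap W_n$ is the solution space of an $(n-1)$-clause instance in the same class and apply the inductive hypothesis again.
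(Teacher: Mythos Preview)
Your Mayer--Vietoris induction breaks at step (3). You need $H_q(U_{n-1}\cap W_n)\isomorph 0$ for $q\ge n-1$, and you propose to obtain this by recognising $U_{n-1}\cap W_n=\bigcup_{i<n}(W_i\cap W_n)$ as a union of $n-1$ wedges of some formula in the same class. But $W_i\cap W_n$ is the solution space of the conjunction $C_i\wedge C_n$, and this is \emph{not} a wedge (the solution space of a single clause). Your suggested fix --- ``conditioning on $C_n$ being satisfied restricts to a subcube or face'' --- is wrong: a wedge is the induced complex of the \emph{complement} of a face in the vertex set, not of a face itself. Concretely, with $C_i=x_1\vee x_2$ and $C_n=\neg x_1\vee\neg x_2$ in $\{0,1\}^2$, the set $W_i\cap W_n$ consists of the two isolated vertices $\{(1,0),(0,1)\}$, which is not the wedge of any $2$-SAT clause. (The alternative phrasing you float, that the union be ``the solution space of a formula with $n-1$ clauses,'' would make it an \emph{intersection} of wedges, whereas the lemma and its inductive hypothesis concern \emph{unions}.) So the inductive hypothesis does not apply to $\bigcup_{i<n}(W_i\cap W_n)$, and the induction does not close.

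The paper's proof is entirely different and does not use Mayer--Vietoris for this lemma at all; Mayer--Vietoris is the tool for the subsequent lemma on mixed unions and intersections, to which the present lemma serves as input. The per-variant argument you expected to be merely the $n=1$ base case is in fact the whole proof. After discarding unused variables, either some variable can be pushed to a fixed value (a variable appearing in only one $2$-SAT clause, respectively a variable appearing only negatively in the Horn case), giving a deformation retract of $\bigcup_i W_i$ onto a single facet of the cube, which is contractible; or else every variable is constrained in a way that forces $d\le n$ (each $2$-SAT clause contributes two literal slots, each Horn clause at most one positive literal). In the latter case $\bigcup_i W_i$ is at most $(d-1)$-dimensional with $d-1\le n-1$, so $H_p\isomorph 0$ for $p\ge n$ for purely dimensional reasons.
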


For the rest of the proof, we assume that $\Phi$ is either a 2-SAT, Horn SAT, or Dual-Horn SAT formula, and do not need to distinguish between these settings.

\begin{claim}\label{claim:mayervietorisApplicable}
    \Cref{thm:mayervietoris} also applies to any $A$ and $B$ that are composed of unions and intersections of wedges.
\end{claim}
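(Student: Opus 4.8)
The plan is to reduce \Cref{claim:mayervietorisApplicable} to the version of the Mayer--Vietoris theorem already available (\Cref{thm:mayervietoris}), whose hypothesis requires the two subspaces to have interiors covering the ambient space. First I would record the structural point that makes this work: each wedge $W_i=\induc{\disc{\Phi_i}}$ is a cubical subcomplex of $[0,1]^d$ (it is a collection of faces closed under passing to subfaces), and a finite union or intersection of cubical subcomplexes is again a cubical subcomplex. Hence any $A$ and $B$ formed from $W_1,\dots,W_n$ using only $\cup$ and $\cap$ are cubical subcomplexes of $[0,1]^d$, and so are $X:=A\cup B$ and $A\cap B$. The reason \Cref{thm:mayervietoris} does not apply directly is that $A$ and $B$ are closed in $X$, so a face common to both lies in the interior of neither; the remedy is to enlarge $A$ and $B$ to open neighborhoods in $X$ that deformation retract back onto them.

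To build such neighborhoods I would triangulate $X$ compatibly with its cubical structure (so that $A$, $B$, and $A\cap B$ become simplicial subcomplexes) and then pass to the first barycentric subdivision $X'$. This leaves the spaces $X$, $A$, $B$, $A\cap B$ unchanged but makes $A$, $B$, and $A\cap B$ \emph{full} subcomplexes of $X'$, since the barycentric subdivision of a subcomplex is always full in the barycentric subdivision of the ambient complex. Because $A\cup B=X$, every vertex of $X'$ lies in $A$ or in $B$. I would then let $N(A)$ and $N(B)$ be the unions of the open stars in $X'$ of the vertices of $A$ and of $B$, respectively. These are open in $X$ and satisfy $N(A)\cup N(B)=X$; since $A$ and $B$ are full, the standard open-star argument gives deformation retractions of $N(A)$ onto $A$ and of $N(B)$ onto $B$; and $N(A)\cap N(B)=N(A\cap B)$, because any simplex of $X=A\cup B$ having a vertex in $A$ and a vertex in $B$ lies (by fullness) entirely in $A$ or entirely in $B$ and hence has a vertex in $A\cap B$, so since $A\cap B$ is full, $N(A)\cap N(B)$ deformation retracts onto $A\cap B$.

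Finally, $N(A)$ and $N(B)$ are open subsets of $X$ whose interiors cover $X$, so \Cref{thm:mayervietoris} produces the exact sequence
\[\ldots\to H_{p+1}(N(A))\oplus H_{p+1}(N(B))\to H_{p+1}(X)\to H_p(N(A)\cap N(B))\to H_p(N(A))\oplus H_p(N(B))\to\ldots\]
all of whose maps are induced by inclusions. The three deformation retractions induce homology isomorphisms in every degree that commute with these inclusion-induced maps, so replacing each $N(\cdot)$ by the subcomplex it retracts onto yields the Mayer--Vietoris sequence of \Cref{thm:mayervietoris} with $A$, $B$, $A\cap B$, and $A\cup B$ directly, which is exactly what the claim asserts. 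The part I expect to require the most care is the neighborhood construction with the triple condition $N(A)\cap N(B)\simeq A\cap B$; this is the lemma underpinning the CW/simplicial form of Mayer--Vietoris (cf.\ \cite{Hatcher}), and the barycentric subdivision is inserted precisely so that $A$, $B$, and $A\cap B$ are full and their open-star neighborhoods interact correctly.
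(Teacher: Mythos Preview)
Your proposal is correct and follows essentially the same approach as the paper: replace $A$ and $B$ by open neighborhoods that deformation retract onto $A$, $B$, $A\cap B$, and $A\cup B$, apply \Cref{thm:mayervietoris} to the thickened pair, and transfer the exact sequence back via the induced homology isomorphisms. The only difference is that the paper thickens $A$ and $B$ inside the ambient cube $[0,1]^d$ and leaves the construction informal (``slightly thickening''), whereas you work intrinsically in $X=A\cup B$ and make the neighborhoods explicit via barycentric subdivision and open stars; your version supplies precisely the details the paper omits.
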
 
\begin{proof}
    Such spaces $A$ and $B$ are unions of faces of the $d$-dimensional hypercube. There exist open supersets $A'$ and $B'$ obtained by slightly thickening $A$ and $B$, such that $A$, $B$, $A\cap B$, and $A\cup B$ are deformation retracts of $A'$, $B'$, $A'\cap B'$, and $A'\cup B'$, respectively. Since $A'$ and $B'$ are open, their interiors cover their union $A'\cup B'$, and \Cref{thm:mayervietoris}. However, the thickening has not changed any of the homology groups appearing in the sequence of \Cref{thm:mayervietoris}, and the sequence is thus also exact for $A$ and $B$.
\end{proof}

\begin{lemma}\label{lem:cupcap}
    For every $n\geq 1$, $p\geq 1$, and $0\leq k\leq n-1$,
\[H_{p+k}\big(\bigcap_{i=1}^{n-k} W_i \cup \bigcup_{j=n-k+1}^{n} W_j\big)\isomorph 0.\]
\end{lemma}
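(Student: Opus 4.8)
\emph{Proof proposal.} The plan is to prove, for every 2-SAT, Horn SAT, or Dual-Horn SAT formula $\Phi$ and every valid choice of $n$ (its number of clauses) and $0\le k\le n-1$, the claimed vanishing, by induction on $n-k$. Throughout, the $W_i$ are the wedges of the formula under consideration (deleting a clause keeps the formula in the same class and does not change the ambient cube, so its wedges are a subset of the original ones), and by \Cref{claim:mayervietorisApplicable} the Mayer--Vietoris sequence of \Cref{thm:mayervietoris} is available for every space below, since all of them are unions and intersections of wedges. Write $S(n,k):=\bigcap_{i=1}^{n-k}W_i\cup\bigcup_{j=n-k+1}^{n}W_j$ for the space in the statement, so $S(n,n-1)=\bigcup_{j=1}^{n}W_j$ and $S(n,0)=\bigcap_{i=1}^{n}W_i=\induc{\disc{\Phi}}$. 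For the base case $n-k=1$, i.e.\ $k=n-1$, the space is the union of all $n$ wedges, so \Cref{lem:simpleunions} gives $H_{p+n-1}(S(n,n-1))\isomorph 0$ for every $p\ge1$, since $p+n-1\ge n$. This is the only step using the specific structure of the three SAT variants, and it is exactly \Cref{lem:simpleunions}.

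For the inductive step assume $n-k\ge 2$, i.e.\ $0\le k\le n-2$. Set $P:=\bigcap_{i=1}^{n-k-1}W_i$ (a nonempty intersection, as $n-k-1\ge1$), $Q:=W_{n-k}$, and $U:=\bigcup_{j=n-k+1}^{n}W_j$ (empty when $k=0$). I would apply \Cref{thm:mayervietoris} to $A:=P\cup U$ and $B:=Q\cup U$. A short distributivity computation, using that $U$ absorbs $P\cap U$ and $Q\cap U$, yields
\[A\cap B=(P\cap Q)\cup U=\bigcap_{i=1}^{n-k}W_i\cup\bigcup_{j=n-k+1}^{n}W_j=S(n,k),\qquad A\cup B=P\cup Q\cup U=\bigcap_{i=1}^{n-k-1}W_i\cup\bigcup_{j=n-k}^{n}W_j=S(n,k+1).\]
The Mayer--Vietoris sequence then contains the exact segment $H_{p+k+1}(A\cup B)\to H_{p+k}(A\cap B)\to H_{p+k}(A)\oplus H_{p+k}(B)$, so it suffices to show the two outer groups vanish; exactness at $H_{p+k}(A\cap B)$ (the argument behind \Cref{lem:isomorphInSequence}) then forces $H_{p+k}(S(n,k))=H_{p+k}(A\cap B)\isomorph0$.

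It remains to check three vanishing facts. First, $H_{p+k+1}(A\cup B)=H_{p+k+1}(S(n,k+1))\isomorph0$: this is the induction hypothesis for $S(n,k+1)$, a strictly smaller instance since $n-(k+1)=(n-k)-1$, read with $p+k+1=p'+(k+1)$ for $p'=p\ge1$. Second, $H_{p+k}(B)=H_{p+k}(Q\cup U)=H_{p+k}\big(\bigcup_{j=n-k}^{n}W_j\big)\isomorph0$ by \Cref{lem:simpleunions}, because this is a union of $k+1$ wedges and $p+k\ge k+1$. Third, $H_{p+k}(A)=H_{p+k}(P\cup U)\isomorph0$: here I would observe that $P\cup U=\bigcap_{i=1}^{n-k-1}W_i\cup\bigcup_{j=n-k+1}^{n}W_j$ is precisely $S(n-1,k)$ for the formula obtained from $\Phi$ by deleting clause $C_{n-k}$ (which still lies in the same SAT variant and has $n-1$ clauses); since $n-1-k=(n-k)-1$ this is again a strictly smaller instance, and the induction hypothesis gives $H_{p+k}(S(n-1,k))\isomorph0$ for every $p\ge1$. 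Taking $k=0$ at the end yields $H_p(\induc{\disc{\Phi}})\isomorph0$ for all $p\ge1$, which is the non-homotopy-universality input we want. One should double-check the degenerate cases along the way: when $k=0$ one has $U=\emptyset$ and the identities above still read correctly ($A=P$, $B=Q$), and when $k=n-1$ we are in the base case.

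The hard part is finding this particular Mayer--Vietoris decomposition: the point is that placing the common ``union part'' $U$ into both $A$ and $B$ makes $A\cap B$ land on $S(n,k)$ and $A\cup B$ on $S(n,k+1)$ — both inside the family $\{S(n',k')\}$ that the induction controls (so that the degree shift by $k$ and the ``few-wedges union'' bound of \Cref{lem:simpleunions} can be exploited) — while $A$ and $B$ separately collapse to a strictly smaller instance and to a pure union of $k+1$ wedges. Everything else is index bookkeeping and the routine verification that each space appearing is built from unions and intersections of wedges so that \Cref{claim:mayervietorisApplicable} applies.
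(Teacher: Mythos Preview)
Your proof is correct and uses exactly the same Mayer--Vietoris decomposition as the paper (your $A=P\cup U$ and $B=Q\cup U$ are the paper's $X$ and $Y$), with only a cosmetic reorganization: you induct on $n-k$ and kill $H_{p+k}(S(n,k))$ directly, whereas the paper inducts on $n$, proves the isomorphism $H_{p+k+1}(S(n,k+1))\cong H_{p+k}(S(n,k))$, and then chains down from $k=n-1$ using \Cref{lem:simpleunions}. Your direct appeal to \Cref{lem:simpleunions} for $H_{p+k}(B)$ is in fact what the paper's appeal to the induction hypothesis for $Y$ with $n'=k+1$ unwinds to.
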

\begin{proof}
We prove this by total induction on $n$. The base case $n=1$ trivially follows from \Cref{lem:simpleunions}.
For the induction step we assume that the theorem holds for all $n'<n$, and prove it for $n$. Using this induction hypothesis, we prove the following for all $k$ with $0\leq k\leq n-2$:
\begin{equation}
    H_{p+k+1}\Big((\bigcap_{i=1}^{n-k-1}W_i)\cup(W_{n-k}\cup \bigcup_{j=n-k+1}^n W_j)\Big)\isomorph H_{p+k}\Big((\bigcap_{i=1}^{n-k-1}W_i\cap W_{n-k})\cup(\bigcup_{j=n-k+1}^n W_j)\Big).\label{eqn:hint}
\end{equation}
Combined with \Cref{lem:simpleunions}, \Cref{eqn:hint} directly implies the theorem. It thus only remains to prove Equation \ref{eqn:hint}. We define the three sets 
\[A:=\bigcap_{i=1}^{n-k-1}W_i, \;\; W:=W_{n-k}, \;\; B:=\bigcup_{j=n-k+1}^n W_j.\] Now note that \Cref{eqn:hint} is equivalent to $H_{p+k+1}(A\cup(W\cup B))\isomorph H_{p+k}((A\cap W)\cup B)$. Let $X=A\cup B$ and $Y=W\cup B$. Then, 
\[X\cup Y=A\cup(W\cup B)\;\;\text{and}\;\;X\cap Y = (A\cup B)\cap (W\cup B)=(A\cap W)\cup B.\]
We thus only need to prove that $H_{p+k+1}(X\cup Y)\isomorph H_{p+k}(X\cap Y)$. By \Cref{thm:mayervietoris} and \Cref{claim:mayervietorisApplicable}, there exists an exact sequence
\begin{equation}
    H_{p+k+1}(X)\oplus H_{p+k+1}(Y)\rightarrow H_{p+k+1}(X\cup Y) \rightarrow H_{p+k}(X\cap Y)\rightarrow H_{p+k}(X)\oplus H_{p+k}(Y).\label{eqn:sequence}
\end{equation}
We can now use the induction hypothesis to show that the left-most and right-most group in this sequence are $0$, and thus the two in the middle are isomorphic as desired. To achieve this, we see that $X$ can be written as 
\[X=\bigcap_{i=1}^{n-k-1}W_i \cup \bigcup_{j=n-k+1}^n W_j=\bigcap_{i=1}^{(n-1)-k}W_i\cup \bigcup_{j=(n-1)-k+1}^{n-1} W_{j+1},\]
which is of the correct form (up to re-indexing) to apply the induction hypothesis with $n'=n-1$, yielding $H_{p+k}(X)\isomorph 0$. Since $p$ can always be increased, also $H_{p+k+1}(X)\isomorph 0$.

We apply a similar approach to $Y$: $Y$ can be written as 
\[Y=\bigcap_{i=n-k}^{n-k}W_i \cup \bigcup_{j=n-k+1}^n W_j=\bigcap_{i=1}^{(n-(n-k-1))-k}W_{i+(n-k-1)} \cup \bigcup_{j=(n-(n-k-1))-k+1}^{n-(n-k-1)} W_{j+(n-k-1)},\]
which is of the correct form to apply the induction hypothesis with $n'=n-(n-k-1)\leq n-1$, to also yield $H_{p+k}(Y)\isomorph 0$ and $H_{p+k+1}(Y)\isomorph 0$.

Thus, $H_{p+k+1}(X)\oplus H_{p+k+1}(Y)\isomorph 0\oplus 0\isomorph 0$, as well as $H_{p+k}(X)\oplus H_{p+k}(Y)\isomorph 0\oplus 0\isomorph 0$, and thus by exactness of the sequence in \Cref{eqn:sequence}, $H_{p+k+1}(X\cup Y)\isomorph H_{p+k}(X\cap Y)$. We have thus proven \Cref{eqn:hint}, and the theorem follows.
\end{proof}

As a consequence of \Cref{lem:cupcap} for $k=0$, we get the following desired corollaries:
\begin{corollary}\label{cor:trivialHomology2Horn}
    For every 2-SAT, Horn SAT, or dual-Horn SAT formula $\Phi$, $H_p(\induc{\disc{\Phi}})\isomorph 0$ for all $p\geq 1$.
\end{corollary}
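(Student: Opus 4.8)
The plan is to read off the corollary as the $k=0$ instance of \Cref{lem:cupcap}. Fix a 2-SAT, Horn SAT, or dual-Horn SAT formula, which, by the normalization discussed before \Cref{lem:simpleunions}, we may assume to be of the form $\Phi=\exists x_1,\ldots,x_d:C_1\wedge\cdots\wedge C_n$ with $n\geq 1$ and each $C_i$ a single clause of the relevant type; recall that in this setting $\induc{\disc{\Phi}}=\bigcap_{i=1}^n W_i$, where $W_i$ denotes the wedge of the clause $C_i$.

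Now apply \Cref{lem:cupcap} with this value of $n$, with $k=0$, and with an arbitrary $p\geq 1$. The union $\bigcup_{j=n-k+1}^{n}W_j=\bigcup_{j=n+1}^{n}W_j$ ranges over the empty index set and is therefore the empty set, so the space appearing in \Cref{lem:cupcap} is exactly $\bigcap_{i=1}^{n}W_i\cup\emptyset=\bigcap_{i=1}^{n}W_i=\induc{\disc{\Phi}}$. \Cref{lem:cupcap} then yields $H_p(\induc{\disc{\Phi}})\isomorph 0$ for every $p\geq 1$, which is precisely the assertion of the corollary.

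The one point worth checking is that \Cref{lem:cupcap} is genuinely established at the boundary value $k=0$: when $n=1$ this is the case $k=n-1$, which is handled directly by \Cref{lem:simpleunions} (the base case of the induction), while for $n\geq 2$ the value $k=0$ falls within the range $0\leq k\leq n-2$ treated by the inductive step through \Cref{eqn:hint}. Beyond this small amount of bookkeeping there is no real obstacle here — all of the topological content, namely the Mayer--Vietoris bootstrapping and the homology bound for unions of wedges, has already been carried out in \Cref{lem:cupcap} and \Cref{lem:simpleunions}.
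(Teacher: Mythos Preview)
Your proof is correct and is exactly the paper's approach: the paper derives the corollary in a single line as the $k=0$ case of \Cref{lem:cupcap}, and your argument merely spells out that case (including the empty-union convention) together with a harmless sanity check that $k=0$ is indeed within the range covered by the lemma.
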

\begin{corollary}\label{cor:nonuniversalSAT}
    For a finite set $S$ of logical relations that are all equivalent to either 2-SAT, Horn SAT, or Dual-Horn SAT, $SAT_C(S)$ is \emph{not} homotopy-universal.
\end{corollary}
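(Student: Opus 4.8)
The plan is to deduce this directly from the homology computation already in hand, \Cref{cor:trivialHomology2Horn}. Fix such an $S$. By hypothesis there is a single one of the three properties — being expressible as a $2$-CNF, as a Horn CNF, or as a Dual-Horn CNF — that holds for \emph{every} $R\in S$ simultaneously (otherwise \Cref{thm:schaefer} would classify $SAT_C(S)$ as $\NP$-complete, contrary to the regime we are in; the affine case is handled separately in \Cref{lem:nonuniversalLinear}). Say every $R\in S$ is equivalent to a $2$-CNF; the Horn and Dual-Horn cases are identical. Given an arbitrary instance $\Phi$ of $SAT_C(S)$ on variables $x_1,\dots,x_d$, I would replace each constraint $R(a_1,\dots,a_{rk(R)})$ — where each $a_i$ is one of $x_1,\dots,x_d$ or a constant $0/1$ — by the conjunction of clauses of at most two literals obtained from the $2$-CNF description of $R$ evaluated at $a_1,\dots,a_{rk(R)}$, and then clean up constants in the usual way: a clause containing the literal $1$ is deleted, and a literal $0$ is dropped from its clause (as already observed just before \Cref{cor:trivialHomology2Horn}). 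Since no new variables are introduced, this yields a genuine $2$-SAT formula $\Psi$ over $x_1,\dots,x_d$ with exactly the same satisfying assignments, so $\disc{\Psi}=\disc{\Phi}$ and hence $\induc{\disc{\Phi}}=\induc{\disc{\Psi}}$.

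Applying \Cref{cor:trivialHomology2Horn} to $\Psi$ then gives $H_p(\induc{\disc{\Phi}})\isomorph 0$ for all $p\geq 1$, and this holds for every instance $\Phi$ of $SAT_C(S)$. To finish, I would exhibit one closed and bounded semi-algebraic set that cannot be realized up to homotopy: the circle $S^1$ is closed, bounded, and semi-algebraic, and $H_1(S^1)\isomorph\mathbb{Z}\not\isomorph 0$. Since homology groups are homotopy invariants, no $\induc{\disc{\Phi}}$ can be homotopy equivalent to $S^1$; therefore $SAT_C(S)$ is not homotopy-universal. (Any fixed space with a nonzero higher homology group, e.g.\ a sphere $S^m$ with $m\geq 1$, would serve equally well, so the argument is quite robust.)

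The only step that is not entirely mechanical is the first reduction, and even there the subtlety is small: one must check that passing from $R$ to its $2$-CNF (resp.\ Horn, Dual-Horn) presentation and then plugging in constants really does leave us with clauses of at most two literals (resp.\ at most one positive literal, at most one negative literal), so that $\Psi$ is literally of the form to which \Cref{cor:trivialHomology2Horn} applies. This is immediate because substituting a constant into a CNF only deletes literals and clauses and never introduces new ones, so each of the three syntactic restrictions is preserved under such substitutions. Everything else is a direct appeal to the already-established homology vanishing together with the homotopy-invariance of homology.
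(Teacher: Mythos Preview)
Your proposal is correct and follows essentially the same route as the paper: the reduction of an arbitrary $SAT_C(S)$ instance to a genuine 2-SAT/Horn/Dual-Horn formula (by expanding each relation into clauses and cleaning out constants) is precisely what the paper already sets up in the paragraph preceding \Cref{cor:trivialHomology2Horn}, and the corollary is then immediate from the homology vanishing together with the existence of a closed bounded semi-algebraic set with nontrivial $H_p$ for some $p\geq 1$. Your choice of $S^1$ as the witness is a perfectly good concrete instantiation of that last step, which the paper leaves implicit.
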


Furthermore, together with \Cref{lem:CNFUniversal}, this finishes the proof of the homotopy-universality dichotomy for CNF variants, \Cref{thm:CNFdichotomy}.

\subsubsection{Proof of Lemma \ref{lem:simpleunions}}
We prove this for 2-SAT and Horn SAT separately, and the claim for Dual-Horn SAT follows by symmetry.

\begin{proof}[Proof of \Cref{lem:simpleunions} for 2-SAT]
    Without loss of generality all variables $x_1,\ldots,x_d$ occur in at least one clause describing some wedge $W_j$, since otherwise there is a deformation retract of $\bigcup_{i=1}^n W_i$ to a subcube only spanned by the variables occurring at least once. Furthermore, we can assume that none of the wedges is described by a clause of the form $(x_i\vee \neg x_i)$ since otherwise the considered union is equal to the whole hypercube, which is contractible.
    
    If any variable $x_j$ only occurs in a single clause describing one of the wedges, without loss of generality positively, there is a deformation retract of $\bigcup_{i=1}^n W_i$ onto the facet $\{x\;|\; x_j=1\}\subseteq \bigcup_{i=1}^n W_i$, since for any vertex $v\in \bigcup_{i=1}^n W_i$, the vertex $v'$ with $v'_k=v_k$ for $j\neq j$ and $v'_j=1$ is also a solution. Since the facet $\{x\;|\; x_j=1\}$ is contractible, it has trivial homology groups $H_p$ for all $p\geq 1$, and thus so does $\bigcup_{i=1}^n W_i$.
    
    If no variable occurs only in a single clause, every variable must occur at least twice. Since the clause describing each wedge contains only two variables, we thus must have $d\leq n$. Since every wedge is a $(d-1)$-dimensional object, $\bigcup_{i=1}^n W_i$ is at most $(n-1)$-dimensional, and thus has trivial $p$-homology for all $p\geq n$.
\end{proof}

The proof works very similarly for Horn SAT:

\begin{proof}[Proof of \Cref{lem:simpleunions} for Horn SAT]
    As in the proof for 2-SAT, without loss of generality all variables $x_1,\ldots,x_d$ occur in at least one clause describing one of the wedges $W_j$, and no clause contains both the literals $x_i$ and $\neg x_i$.

    If any variable $x_j$ only occurs negatively, there is again a deformation retract of $\bigcup_{i=1}^n W_i$ onto the facet $\{x\;|\; x_j=0\}\subseteq \bigcup_{i=1}^n W_i$.

    If every variable occurs positively in at least one clause, we must have at least $d$ clauses, i.e., $d\leq n$. Since again every wedge is a $(d-1)$-dimensional object, $\bigcup_{i=1}^n W_i$ has trivial $p$-homology for all $p\geq n\geq d$.
\end{proof}

\subsection{Absence of Projection-Universality}\label{sec:nonuniversalityProjection}
We now make the necessary adaptations for the results of \Cref{sec:nonuniversalityHomotopy} to translate to projection-universality.

The proofs of \Cref{lem:nonuniversal0valid,lem:nonuniversal1valid} translate directly, since projecting the non-empty set $\disc{\Phi}$ always yields a non-empty set, and thus $\induc{\pi(\disc{\Phi})}$ is non-empty too.

For all the other cases $SAT_C(S)$, we will show that for every instance $\Phi$ of $SAT_C(S)$ and every projection $\pi_D$, there exists an instance $\Phi'$ of $SAT_C(S')$, such that $S'$ fulfills the same condition of Schaefer's dichotomy theorem as $S$, and $\disc{\Phi'}=\pi_D(\disc{\Phi})$.
Recall that by \Cref{cor:trivialHomology2Horn} and the proof of \Cref{lem:nonuniversalLinear}, there exists a semi-algebraic set $X$, such that for no instance $\Phi'$ of $SAT_C(S')$,  $\induc{\disc{\Phi'}}$ is homotopy equivalent to $X$. Since two spaces being homeomorphic implies that they are homotopy equivalent, $\induc{\pi_D(\disc{\Phi})}$ can thus also not be homeomorphic to $X$. This shows that $SAT_C(S)$ is not projection-universal.

\begin{lemma}\label{lem:projectionAffine}
    Let $S$ be a finite set of affine logical relations. Then, for every $d$-dimensional instance $\Phi\in SAT_C(S)$ and every $D\subseteq [d]$, there exists an instance $\Phi'$ of $SAT_C(S')$ for some finite set $S'$ of affine logical relations, such that $\disc{\Phi'}=\pi_D(\disc{\Phi})$.
\end{lemma}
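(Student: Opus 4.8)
\textbf{Proof plan for \Cref{lem:projectionAffine}.}
The plan is to eliminate the projected-away variables one at a time using the structure of affine constraints over $GF(2)$, showing that each elimination step keeps us inside the class of affine CSPs with constants. First I would, as in the proof of \Cref{lem:nonuniversalLinear}, rewrite $\Phi$ as an equivalent conjunction of constraints each of the form $x_{i_1}\oplus\ldots\oplus x_{i_k}\oplus c=0$ with $c\in\{0,1\}$ and the $x_{i_j}$ pairwise distinct (constants can be absorbed into $c$). Since projections can be chained ($\pi_D\circ\pi_{D'}=\pi_{D\cup D'}$), it suffices to handle the case $D=\{j\}$ of a single variable $x_j$; iterating then gives the general statement, and the resulting relation set $S'$ is finite because there are only finitely many affine relations of rank at most $d$.

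The key step is the single-variable elimination. Let $E$ be the set of constraints of $\Phi$ in which $x_j$ appears. If $E=\emptyset$, then $\pi_{\{j\}}(\disc{\Phi})$ is exactly the solution set of $\Phi$ with $x_j$ deleted, so $\Phi'=\Phi$ works. Otherwise, pick one constraint $e_0\in E$; using $e_0$ we can solve for $x_j$ as an affine function of the remaining variables, namely $x_j=\bigoplus_{i\neq j \text{ in } e_0} x_i \oplus c_0$. Substituting this expression for $x_j$ into every other constraint $e\in E\setminus\{e_0\}$ and into $e_0$ itself produces a new system $\Phi'$ on the variables $\{x_i : i\neq j\}$: each such constraint remains a $GF(2)$-affine equation (a symmetric difference of variable-occurrences possibly cancels repeated variables, and the constant terms combine mod $2$), so after removing duplicate variables each constraint of $\Phi'$ is again of the required affine form, possibly with constants. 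Constraints of $\Phi$ not involving $x_j$ are carried over unchanged. The standard fact about Gaussian elimination over $GF(2)$ gives that an assignment to $\{x_i:i\neq j\}$ satisfies $\Phi'$ if and only if it extends (by setting $x_j$ to the value dictated by $e_0$) to a satisfying assignment of $\Phi$; hence $\disc{\Phi'}=\pi_{\{j\}}(\disc{\Phi})$.

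The main obstacle — really the only subtlety — is bookkeeping the constants and the cancellation of repeated variables: after substitution a constraint may momentarily contain a variable twice (which cancels to nothing over $GF(2)$, possibly shifting the constant) or may become the trivial equation $0=0$ (which is discarded) or $0=1$ (in which case $\Phi'$ is unsatisfiable, which is fine since it only happens when $\pi_{\{j\}}(\disc{\Phi})$ is empty). Once one checks that the normal form is preserved under these operations, the relation $S'$ we use is simply the set of all affine relations of rank $\le d$ arising this way, which is finite, and membership of $\Phi'$ in $SAT_C(S')$ with $S'$ affine is immediate. Iterating over all $j\in D$ completes the proof.
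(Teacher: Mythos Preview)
Your proof is correct, but it takes a genuinely different route from the paper. The paper does not eliminate variables one by one; instead it projects all of $D$ at once and defines $\Phi'$ to be the single relation $R:=\pi_D(\disc{\Phi})$ on the remaining $d-|D|$ variables. To show that $R$ is affine, it invokes Schaefer's closure characterization (a relation is affine iff it is closed under $s_1\oplus s_2\oplus s_3$): for any $x_1,x_2,x_3\in R$ there exist witnesses $y_1,y_2,y_3$ with $(x_i,y_i)\in\disc{\Phi}$, and since $\disc{\Phi}$ is itself affine, $(x_1\oplus x_2\oplus x_3,\,y_1\oplus y_2\oplus y_3)\in\disc{\Phi}$, whence $x_1\oplus x_2\oplus x_3\in R$.

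Your argument is the constructive counterpart: it is ordinary Gaussian elimination over $GF(2)$, carried out one variable at a time, producing an explicit affine system for $\Phi'$ rather than a single projected relation. The paper's approach is shorter and avoids the bookkeeping of cancellations and trivial equations you had to track; your approach, on the other hand, yields an explicit description of $\Phi'$ as a conjunction of $GF(2)$ equations and would also control the number and arity of the resulting constraints, were that needed. Either way the conclusion is the same.
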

\begin{proof}
    By \cite[Lemma 3.1A]{schaefer}, a relation $R$ is affine if and only if for every three $s_1,s_2,s_3\in R$, we also have $s_1\oplus s_2\oplus s_3\in R$.

    Let $[d]$ be the set of variables of $\Phi$. Without loss of generality, let $D=\{x_{d-|D|+1},\ldots,x_d\}$. A vertex $x\in\{0,1\}^{d-|D|}$ is in $\pi_D(\disc{\Phi})$ if and only if there exists $y\in\{0,1\}^{|D|}$ such that $(x,y)\in\disc{\Phi}$.

    We build a single relation $R$ of rank $d-|D|$ consisting of exactly $\pi_D(\disc{\Phi})$ and show that $R$ is affine:
    For every three $x_1,x_2,x_3\in R=\pi_D(\disc{\Phi})$, we have
    \[\exists y_1,y_2,y_3: (x_1,y_1), (x_2,y_2), (x_3,y_3)\in \disc{\Phi}.\]
    However, since $\disc{\Phi}$ is the set of solutions to a system of affine equations over $GF(2)$, we must have that $(x_1\oplus x_2\oplus x_3, y_1\oplus y_2\oplus y_3)\in\disc{\Phi}$, and thus $(x_1\oplus x_2\oplus x_3)\in \pi_D(\disc{\Phi})=R$. Thus, $R$ is affine.

    We conclude that the formula $\exists x_1,\ldots,x_{d-|D|}:R(x_1,\ldots,x_{d-|D|})$, which is an instance of $SAT_C(S')$ for $S'=\{R\}$, fulfills the necessary conditions.
\end{proof}

\begin{lemma}\label{lem:projection2SAT}
    Let $S$ be a finite set of logical relations each equivalent to a 2-SAT formula. Then, for every $d$-dimensional instance $\Phi\in SAT_C(S)$ and every $D\subseteq [d]$, there exists a 2-SAT formula $\Phi'$ such that $\disc{\Phi'}=\pi_D(\disc{\Phi})$.
\end{lemma}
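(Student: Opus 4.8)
The plan is to mimic the structure of the proof of \Cref{lem:projectionAffire}: take the single relation $R$ whose tuple-set is exactly $\pi_D(\disc{\Phi})$, and show it is equivalent to a $2$-SAT formula, i.e.\ a $2$-CNF. The natural tool is the closure characterization of $2$-SAT relations analogous to \cite[Lemma 3.1A]{schaefer} for the affine case: a relation $R$ is equivalent to a $2$-CNF if and only if it is closed under the ternary majority operation, i.e.\ for every three tuples $s_1,s_2,s_3\in R$, the coordinatewise majority $\mathrm{maj}(s_1,s_2,s_3)$ also lies in $R$. (This is a standard fact; if the paper wants to stay self-contained, one can cite the relevant lemma from \cite{schaefer} or a CSP textbook.) So the core of the proof reduces to verifying this closure property for $R=\pi_D(\disc{\Phi})$.

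First I would set up notation exactly as in \Cref{lem:projectionAffine}: without loss of generality $D=\{x_{d-|D|+1},\ldots,x_d\}$, and $x\in\pi_D(\disc{\Phi})$ iff there exists $y$ with $(x,y)\in\disc{\Phi}$. Let $R$ be the rank-$(d-|D|)$ relation whose tuple-set is $\pi_D(\disc{\Phi})$. Given three tuples $x_1,x_2,x_3\in R$, pick witnesses $y_1,y_2,y_3$ with $(x_i,y_i)\in\disc{\Phi}$. Now $\disc{\Phi}$ is the solution set of a conjunction of relations each equivalent to a $2$-CNF, hence $\disc{\Phi}$ itself, as the solution set of a $2$-CNF formula, is closed under coordinatewise majority. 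Therefore $(\mathrm{maj}(x_1,x_2,x_3),\mathrm{maj}(y_1,y_2,y_3))\in\disc{\Phi}$, and projecting gives $\mathrm{maj}(x_1,x_2,x_3)\in\pi_D(\disc{\Phi})=R$. Thus $R$ is majority-closed, hence equivalent to a $2$-CNF $\Phi'$, and by construction $\disc{\Phi'}=R=\pi_D(\disc{\Phi})$. Combined with the remark preceding the lemma (that no $2$-SAT solution space, up to homeomorphism, realizes the witnessing semi-algebraic set $X$ of \Cref{cor:trivialHomology2Horn}), this yields that $SAT_C(S)$ is not projection-universal.

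The main obstacle, and the only nontrivial ingredient, is the closure characterization of $2$-SAT relations and the observation that closure properties (being a polymorphism) automatically survive both conjunction and existential projection — but this is precisely the standard ``polymorphisms are preserved under primitive positive definitions'' phenomenon from universal-algebra approaches to CSPs, so it is routine once stated correctly. One should be slightly careful that the projected relation $R$ might be empty or might contain ``redundant'' coordinates (a variable not appearing in any $2$-clause), but an empty relation is trivially $2$-CNF-definable (e.g.\ $x\wedge\neg x$ for one of the surviving variables), and absent variables cause no problem since the majority-closure argument is coordinatewise and does not require every variable to occur. The same template — replacing the majority operation with the appropriate binary $\wedge$ (for Horn) or $\vee$ (for dual-Horn) polymorphism — will handle the Horn and dual-Horn cases in the subsequent lemmas.
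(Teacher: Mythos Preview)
Your proof is correct but takes a genuinely different route from the paper. The paper argues syntactically: it eliminates one variable $x_d$ at a time, partitions the clauses of $\Phi$ according to whether and with which sign $x_d$ occurs, observes that after deleting $x_d$ the two blocks $\Psi^{\in+}$ and $\Psi^{\in-}$ are each conjunctions of single literals, and then distributes $(\bigwedge_\alpha \alpha)\vee(\bigwedge_\beta \beta)$ into a conjunction of binary clauses $(\alpha\vee\beta)$. This yields an explicit $2$-CNF $\Phi'$ and then proceeds by induction on $|D|$. Your argument is the semantic/algebraic counterpart: you invoke the bijunctive characterization (closure under coordinatewise majority, Schaefer's Lemma~3.1B) and use that polymorphisms are preserved under conjunction and existential projection---exactly parallel to what the paper itself does for the affine case in \Cref{lem:projectionAffine}. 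Your approach is cleaner, handles all of $D$ in one shot, and, as you note, immediately gives the Horn and dual-Horn lemmas by swapping in the $\wedge$ or $\vee$ polymorphism; the paper's approach has the advantage of producing an explicit formula (and not needing to cite the majority characterization), which is why the same resolution-style step can be reused verbatim for \Cref{lem:projectionHorn}. Either argument suffices for the purposes of the paper.
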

\begin{proof}
    We only consider projection by a single variable, i.e., $D=\{x_d\}$. The statement then follows by induction, since $\pi_{D'}\circ\pi_D=\pi_{D'\cup D}$. Without loss of generality, let $\Phi$ be given as a 2-SAT formula.

    We split the set of clauses of $\Phi$ into two subformulas: The clauses of $\Phi$ containing the variable $x_d$ are put into $\Phi^\in$, and those not containing $x_d$ into $\Phi^{\not\in}$. Furthermore, $\Phi^\in$ is further split into two subformulas: clauses containing the positive literal $x_d$ are put into $\Phi^{\in+}$, and those containing the negative literal $\neg x_d$ into $\Phi^{\in-}$. We can assume there exists no trivial clause $(x_d\vee\neg x_d)$, thus $\Phi^{\in+},\Phi^{\in-},\Phi^{\not\in}$ form a proper partition of the clauses of $\Phi$.
    
    Let us now consider the projection $\pi_D(\disc{\Phi})$. We first see that 
    \[(x_{1,\ldots,d-1},0)\in \disc{\Phi^{\not\in}}\Leftrightarrow (x_{1,\ldots,d-1},1)\in \disc{\Phi^{\not\in}}.\]
    Furthermore,
    \[(x_{1,\ldots,d-1},1)\in \disc{\Phi^{\in+}}\;\text{ and }\;(x_{1,\ldots,d-1},0)\in \disc{\Phi^{\in-}}.\]
    We thus have that $(x_{1,\ldots,d-1})\in \pi_D(\disc{\Phi})$ if and only if 
    \[\bigg((x_{1,\ldots,d-1},0)\in\disc{\Phi^{\not\in}}\text{ and } \Big((x_{1,\ldots,d-1},0)\in\disc{\Phi^{\in+}} \text{ or } (x_{1,\ldots,d-1},1)\in\disc{\Phi^{\in-}}\Big)\bigg)\] 
    We can thus write a formula $\Phi':=\Psi^{\not\in}\wedge (\Psi^{\in+}\vee \Psi^{\in-})$,
    where $\Psi^{\not\in}$ is equal to $\Phi^{\not\in}$ except that it disregards $x_d$, and $\Psi^{\in+}$ and $\Psi^{\in-}$ are versions of $\Phi^{\in+}$ and $\Phi^{\in-}$ with all literals $x_d$ and $\neg x_d$ removed. Since every clause of $\Phi^{\in+}$ and $\Phi^{\in-}$ contained $x_d$, $\Psi^{\in+}$ and $\phi^{\in-}$ are now in 1-CNF. Phrased more intuitively, they are each a single conjunction of some literals. Thus, we can also write $\Phi'$ as
    \[\Psi^{\not\in}\wedge \big((\bigwedge_{\alpha\in \Psi^{\in+}}\alpha)\vee (\bigwedge_{\beta\in\Psi^{\in-}}\beta)\big).\]
    Applying the distributive property of $\wedge$ and $\vee$, we get that 
    \[\Phi'=\Psi^{\not\in}\wedge \bigwedge_{\alpha\in\Psi^{\in+}, \beta\in\Psi^{\in-}}(\alpha\vee\beta),\]
    which is a formula in 2-SAT, with $\disc{\Phi'}=\pi_D(\disc{\Phi})$ as required.    
\end{proof}

\begin{lemma}\label{lem:projectionHorn}
    Let $S$ be a finite set of logical relations each equivalent to a Horn SAT (Dual-Horn SAT) formula. Then, for every $d$-dimensional instance $\Phi\in SAT_C(S)$ and every $D\subseteq [d]$, there exists a Horn SAT (Dual-Horn SAT) formula $\Phi'$ such that $\disc{\Phi'}=\pi_D(\disc{\Phi})$.
\end{lemma}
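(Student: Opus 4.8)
The plan is to mirror the proof of \Cref{lem:projection2SAT}, using the symmetry between Horn SAT and Dual-Horn SAT to reduce to a single case. The Dual-Horn case follows from the Horn case: the involution $x\mapsto\bar x$ (flipping $0\leftrightarrow 1$ in every coordinate) on $\{0,1\}^d$ turns a Dual-Horn formula into a Horn formula, commutes with every projection $\pi_D$, and maps Horn back to Dual-Horn; so a Horn formula realizing $\pi_D$ of the flipped instance gives, after flipping back, a Dual-Horn formula realizing $\pi_D(\disc{\Phi})$. As in \Cref{lem:projection2SAT}, it then suffices to treat $D=\{x_d\}$, since $\pi_{D'}\circ\pi_D=\pi_{D'\cup D}$; and we may assume $\Phi$ is given as a Horn CNF on $x_1,\ldots,x_d$ without constants.

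Next I would split the clauses of $\Phi$ into $\Phi^{\not\in}$ (not mentioning $x_d$), $\Phi^{\in+}$ (containing the literal $x_d$), and $\Phi^{\in-}$ (containing $\neg x_d$); discarding any trivial clause $(x_d\vee\neg x_d)$, these partition the clauses of $\Phi$. The key structural point is the asymmetry enforced by the Horn condition: in a clause of $\Phi^{\in+}$ the literal $x_d$ is the \emph{unique} positive literal, so deleting $x_d$ leaves a purely negative clause; in a clause of $\Phi^{\in-}$ the literal $\neg x_d$ is negative, so deleting it leaves a clause with at most one positive literal, i.e.\ still a Horn clause. Write $\Psi^{\in+}$, $\Psi^{\in-}$ for the conjunctions obtained by these deletions, and $\Psi^{\not\in}:=\Phi^{\not\in}$. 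Exactly as in \Cref{lem:projection2SAT}, $(x_{1,\ldots,d-1},0)$ satisfies $\Phi$ iff $x_{1,\ldots,d-1}$ satisfies $\Psi^{\not\in}\wedge\Psi^{\in+}$, and $(x_{1,\ldots,d-1},1)$ satisfies $\Phi$ iff $x_{1,\ldots,d-1}$ satisfies $\Psi^{\not\in}\wedge\Psi^{\in-}$; hence $x_{1,\ldots,d-1}\in\pi_D(\disc{\Phi})$ iff it satisfies $\Psi^{\not\in}\wedge(\Psi^{\in+}\vee\Psi^{\in-})$.

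Finally, I would distribute the disjunction over the two conjunctions to obtain
\[\Phi' \;:=\; \Psi^{\not\in}\wedge \bigwedge_{C\in\Psi^{\in+},\,C'\in\Psi^{\in-}}(C\vee C').\]
Here each $C$ is an all-negative clause and each $C'$ is a Horn clause, so $C\vee C'$ has at most one positive literal and is a Horn clause; together with the Horn clauses of $\Psi^{\not\in}$ this exhibits $\Phi'$ as a Horn formula with $\disc{\Phi'}=\pi_D(\disc{\Phi})$. I expect the only real obstacle to be pinning down this closure property --- that the OR of an all-negative clause and an arbitrary Horn clause is Horn --- and noting that it, rather than the collapse to $1$-CNF available in the 2-SAT proof, is what keeps the class closed under variable elimination. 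Degenerate cases (an empty $\Phi^{\in+}$ or $\Phi^{\in-}$, or a deletion producing the empty clause, which simply means $x_d$ is forced to a fixed value) are absorbed by the conventions that an empty conjunction is \emph{true} and the empty clause is \emph{false}, under which the displayed distributive identity still holds.
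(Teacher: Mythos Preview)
Your proposal is correct and follows essentially the same approach as the paper: reduce to eliminating a single variable, split the clauses by how they mention $x_d$, observe that deleting $x_d$ from $\Phi^{\in+}$ yields all-negative clauses while deleting $\neg x_d$ from $\Phi^{\in-}$ yields Horn clauses, and then distribute to get $\bigwedge_{C,C'}(C\vee C')$, each of which is Horn. The paper's proof is more terse (it simply says ``perform the same steps as in the previous proof'' and notes the key closure property), whereas you spell out the distributive identity, the involution justifying the Dual-Horn case, and the degenerate cases explicitly; but the substance is the same.
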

\begin{proof}
    We only run the proof for Horn SAT, the case for Dual-Horn SAT works symmetrically. Let again $D=\{x_d\}$. We perform the same steps as in the previous proof of \Cref{lem:projection2SAT}. Now, instead of being in 1-CNF, the clauses of $\Psi^{\in+}$ are disjunctions of only negative literals. Similarly, the clauses of $\Psi^{\in-}$ are disjunctions of at most one positive literal, i.e., $\Psi^{\in-}$ is in Horn SAT.
    
    We thus see that the clauses $\alpha\vee\beta$ are again in Horn SAT, and that the whole formula $\Phi'$ must be in Horn SAT.
\end{proof}

This finishes the proof of \Cref{thm:main}. Finally, we show that \Cref{thm:main} and \Cref{thm:CNFdichotomy} cannot be strengthened to a homotopy-universality dichotomy for CSPs in general:
\begin{lemma}\label{lem:HomotopyFalse}
    There exists a finite set of logical relations $S$ such that $SAT_C(S)$ is classified as \NP-complete by \Cref{thm:schaefer}, but $SAT_C(S)$ is \emph{not} homotopy-universal.
\end{lemma}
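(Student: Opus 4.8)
The plan is to take the relation underlying \oneinthree, namely $R=\{(1,0,0),(0,1,0),(0,0,1)\}\subseteq\{0,1\}^3$, and set $S=\{R\}$. First I would verify that $R$ satisfies none of Schaefer's conditions, so that $SAT_C(S)$ (and even $SAT(S)$) is classified as \NP-complete: $R$ is not closed under componentwise $\wedge$, since $(1,0,0)\wedge(0,1,0)=(0,0,0)\notin R$, so it is not Horn; it is not closed under componentwise $\vee$, since $(1,0,0)\vee(0,1,0)=(1,1,0)\notin R$, so it is not Dual-Horn; it is not closed under componentwise majority, since the majority of the three elements of $R$ is $(0,0,0)\notin R$, so it is not 2-SAT; and it is not closed under $s_1\oplus s_2\oplus s_3$, since the XOR of the three elements of $R$ is $(1,1,1)\notin R$, so by \cite[Lemma 3.1A]{schaefer} it is not affine. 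It is moreover neither $0$-valid nor $1$-valid.

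The heart of the argument is to show that the solution space of \emph{every} instance of $SAT_C(S)$ is homotopy equivalent to a finite discrete space. The key observation is: \textbf{if a variable $x_i$ occurs in at least one constraint of an instance $\Phi$, then no two vertices of $\disc{\Phi}$ differ in coordinate $i$ only.} To see this, pick a constraint $C$ of $\Phi$ containing $x_i$; after substituting the constants $0,1$ and identifying repeated arguments, $C$ expresses ``exactly one of three slots equals $1$'', and at least one of those slots is $x_i$. If $C$ contains $x_i$ at least twice, $C$ is either unsatisfiable (e.g.\ $R(x_i,x_i,x_i)$ or $R(x_i,1,1)$) or it immediately pins the value of $x_i$ on every solution, so $x_i$ is certainly not flippable. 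If $C$ contains $x_i$ exactly once, say $C$ is $R(x_i,a,b)$ with $a,b$ not involving $x_i$, then for an assignment to satisfy $C$ we need $x_i + (\text{\# of $1$'s among } a,b) = 1$; flipping $x_i$ while fixing all other coordinates changes the left-hand side by $\pm1$, so the flipped assignment violates $C$. Hence both assignments cannot be solutions.

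Given this observation, let $f$ be the number of variables of $\Phi$ that occur in no constraint. Every edge of the hypercube joins two vertices differing in exactly one coordinate, so all edges of $\induc{\disc{\Phi}}$ — and therefore all of its higher-dimensional faces, since a face contains its edges — lie in the $f$ ``free'' directions. Writing $\Phi''$ for the instance obtained from $\Phi$ by deleting all free variables, we get that $\induc{\disc{\Phi}}$ is homeomorphic to $\induc{\disc{\Phi''}}\times[0,1]^f$, where $\induc{\disc{\Phi''}}$ has no edges and is therefore a finite discrete set (possibly empty). Projecting away the cube factor is a deformation retract, so $\induc{\disc{\Phi}}$ is homotopy equivalent to a finite discrete space. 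Finally, the circle $S^1=\{x\in\R^2 : x_1^2+x_2^2-1=0\}$ is a closed and bounded semi-algebraic set with $H_1(S^1)\isomorph\mathbb{Z}$, whereas every finite discrete space has $H_p\isomorph 0$ for all $p\geq 1$; hence no instance of $SAT_C(S)$ has a solution space homotopy equivalent to $S^1$, and $SAT_C(S)$ is not homotopy-universal.

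The main obstacle is making the key observation airtight in the presence of constants and repeated variables in constraints: one must check that no degenerate instantiation of $R$ — for instance $R(x_i,x_i,x_j)$, $R(x_i,0,0)$, $R(x_i,1,0)$, $R(x_i,1,1)$, or $R(x_i,x_i,x_i)$ — secretly creates a flippable coordinate, which amounts to a short but exhaustive case check showing each such constraint is either unsatisfiable or forces the value of $x_i$ on all its solutions. A minor secondary point is to spell out that a cubical complex with no $1$-faces has no higher faces (and is thus a genuine $0$-dimensional, i.e.\ discrete, space) and that the product decomposition $\induc{\disc{\Phi}}\cong\induc{\disc{\Phi''}}\times[0,1]^f$ is valid.
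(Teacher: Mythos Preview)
Your proposal is correct and uses the same example as the paper (\oneinthree), reaching the same conclusion that $\induc{\disc{\Phi}}$ is always a disjoint union of contractible cubes. The only real difference is in how this is justified: the paper argues geometrically that each clause's solution set is a disjoint union of at most three faces and that intersecting such sets again yields disjoint faces, whereas you argue combinatorially that any variable appearing in a constraint is never flippable, giving the product decomposition $\induc{\disc{\Phi''}}\times[0,1]^f$; you also supply the explicit Schaefer-condition check and the $S^1$ witness that the paper leaves implicit.
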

\begin{proof}
    Consider \oneinthree, the boolean CSP (with constants) given by the constraints encoding that exactly one of three literals is true. Then, the solution space of each clause is the disjoint union of (at most) three (at least) $d-3$-dimensional faces of the hypercube. Taking the intersection of such solution spaces can only yield disjoint faces. Thus, for every $\Phi$ in \oneinthree, $\induc{\disc{\Phi}}$ is the disjoint union of contractible components, and \oneinthree is not homotopy-universal.
\end{proof}

\clearpage
\bibliography{literature}

\end{document}